\definecolor{monbbleu}{RGB}{76, 114, 176}
\renewcommand{\vec}[1]{\boldsymbol{#1}}
\newcommand{\mean}[1]{\left\langle #1 \right\rangle}
\newcommand{\pder}[2]{\ensuremath\frac{\partial #1}{\partial #2}}
\newcommand{\set}[1]{\left\{#1\right\}}
\newcommand{\bigo}[1]{\ensuremath\mathcal{O}\left(#1\right)}
\newcommand{\mi}{\ensuremath I(X;G)}
\newcommand{\recon}{\ensuremath U(G\,|\,X)}
\newcommand{\pred}{\ensuremath U(X\,|\,G)}
\newcommand{\avgk}{\ensuremath\langle k \rangle}
\theoremstyle{plain}
   \newtheorem{theorem}{Theorem}
   \newtheorem{lemma}{Lemma}
   \newtheorem{conjecture}{Conjecture}
\theoremstyle{definition}
   \newtheorem{definition}{Definition}
\theoremstyle{remark}
\begin{document}
%
%
\title{Duality between predictability and reconstructability in complex systems}
\author{Charles Murphy}%
\email{charles.murphy.1@ulaval.ca}
\affiliation{D\'epartement de physique, de g\'enie physique et d'optique, Universit\'e Laval, Qu\'ebec (Qu\'ebec), Canada G1V 0A6}%
\affiliation{Centre interdisciplinaire en mod\'elisation math\'ematique, Universit\'e Laval, Qu\'ebec (Qu\'ebec), Canada G1V 0A6}%
\author{Vincent Thibeault}%
\affiliation{D\'epartement de physique, de g\'enie physique et d'optique, Universit\'e Laval, Qu\'ebec (Qu\'ebec), Canada G1V 0A6}%
\affiliation{Centre interdisciplinaire en mod\'elisation math\'ematique, Universit\'e Laval, Qu\'ebec (Qu\'ebec), Canada G1V 0A6}%
\author{Antoine Allard}%
\affiliation{D\'epartement de physique, de g\'enie physique et d'optique, Universit\'e Laval, Qu\'ebec (Qu\'ebec), Canada G1V 0A6}%
\affiliation{Centre interdisciplinaire en mod\'elisation math\'ematique, Universit\'e Laval, Qu\'ebec (Qu\'ebec), Canada G1V 0A6}%
\author{Patrick Desrosiers}%
\affiliation{D\'epartement de physique, de g\'enie physique et d'optique, Universit\'e Laval, Qu\'ebec (Qu\'ebec), Canada G1V 0A6}%
\affiliation{D\'epartement de physique, de g\'enie physique et d'optique, Universit\'e Laval, Qu\'ebec (Qu\'ebec), Canada G1V 0A6}%
\affiliation{Centre interdisciplinaire en mod\'elisation math\'ematique, Universit\'e Laval, Qu\'ebec (Qu\'ebec), Canada G1V 0A6}%
\affiliation{Centre de recherche CERVO, Qu\'ebec (Qu\'ebec), Canada G1J 2G3}%

\date{\today}
\begin{abstract}
    Predicting the evolution of a large system of units using its structure of interaction is a fundamental problem in complex system theory. And so is the problem of reconstructing the structure of interaction from temporal observations. Here, we find an intricate relationship between predictability and reconstructability using an information-theoretical point of view. We use the mutual information between a random graph and a stochastic process evolving on this random graph to quantify their codependence. Then, we show how the uncertainty coefficients, which are intimately related to that mutual information, quantify our ability to reconstruct a graph from an observed time series, and our ability to predict the evolution of a process from the structure of its interactions. Interestingly, we find that predictability and reconstructability, even though closely connected by the mutual information, can behave differently, even in a dual manner. We prove how such duality universally emerges when changing the number of steps in the process, and provide numerical evidence of other dualities occurring near the criticality of multiple different processes evolving on different types of structures.
\end{abstract}
\maketitle

\section{Introduction}
\label{sec:introduction}
The relationship between \emph{structure} and \emph{function} is fundamental in complex systems~\cite{barabasi2013network, latora2017complex, newman2018networks}, and important efforts have been invested in developing network models to better understand it.
In particular, models of dynamics on networks~\cite{barzel2013universality, pastor2015epidemic, Boccaletti2016, iacopini2019simplicial} have been proposed to assess the influence of network structure over the temporal evolution of the activity in the system.
In turn, data-driven models~\cite{hebert2020macroscopic, murphy2021deep}, dimension-reduction techniques~\cite{gao2016universal, laurence2019spectral, pietras2019network, thibeault2020threefold} and mean-field frameworks~\cite{pastor2001epidemic, hebert2015complex, st2018phase, st2021master, st2021universal} have deepened our predictive capabilities.
Among other things, these theoretical approaches have shed light on the relationship between dynamics criticality and many network properties such as the degree distribution~\cite{pastor2001epidemic, st2018phase}, the eigenvalue spectrum~\cite{ferreira2012epidemic, castellano2017relating, pastor2018eigenvector} and their group structure~\cite{hebert2010propagation, st2021social, st2021master}.
Fundamentally, these contributions justify our inclination for measuring and using real-world networks as a proxy to predict the behavior of complex systems.

Models of dynamics on networks have also been used as reverse engineering tools for network reconstruction~\cite{brugere2018network}, when the networks of interactions are unavailable, noisy~\cite{peixoto2018reconstructing, young2020bayesian, young2021reconstruction} or faulty~\cite{laurence2020detecting}.
The network reconstruction problem has stimulated many technical contributions~\cite{mccabe2020netrd}: Thresholding matrices built from correlation~\cite{kramer2009network} or other more sophisticated measures~\cite{schreiber2000measuring, seth2005causal} of time series, Bayesian inference of graphical models~\cite{abbeel2006learning,salakhutdinov2008quantitative,montanari2009graphical,salakhutdinov2010deep,bresler2013reconstruction,amin2018quantum} and models of dynamics on networks~\cite{peixoto2019network}, among others.
These techniques are widely used (e.g., in neuroscience~\cite{hinne2013bayesian, breakspear2017dynamic, bassett2018nature}, genetics~\cite{wang2006inferring}, epidemiology~\cite{peixoto2019network, prasse2020network} and finance~\cite{musmeci2013bootstrapping}) to reconstruct interaction networks on which network science tools can be applied.

Interestingly, dynamics prediction and network reconstruction are usually considered separately, even though they are related to one another.
The emergent field of the network neuroscience~\cite{bassett2017network, sporns2020structure} is perhaps the most actively using both notions: Network reconstruction for building brain connectomics from functional time series, then dynamics prediction for inferring various brain disorders from these connectomes~\cite{fornito2015connectomics,van2019cross}.
Recent theoretical works have also taken advantage of these notions to show that dynamics hardly depend on the structure.
In Ref.~\cite{prasse2022predicting}, it was shown that time series generated by a deterministic dynamics evolving on a specific graph can be accurately predicted by a broad range of other graphs.
These findings highlight how poor our intuition can be with regard to the relationship between predictability and reconstructability.
Furthermore, recent breakthroughs in deep learning on graphs have benefited from proxy network substrates to enhance the predictive power of their models~\cite{zhang2018deep, zhou2018graph, xu2018powerful}, with applications in epidemiology~\cite{shah2020finding, murphy2021deep}, and pharmaceutics~\cite{fout2017protein, zitnik2018modeling}.
However, the use of graph neural networks and those proxy network substrates is only supported by numerical evidence and lacks a rigorous theoretical justification.
As a result, their enhanced predictability remains to be fully corroborated.
There is therefore a need for a solid, theoretical foundation of reconstructability, predictability and their relationship in networked systems.

In this work, we establish a rigorous framework that lays such a foundation based on information theory.
Information theory has been regularly applied to networks and dynamics in the past.
In network science, it has been used to characterize random graph ensembles~\cite{bianconi2009entropy, anand2009entropy, anand2010gibbs}---e.g. the configuration model~\cite{johnson2010entropic, anand2011shannon} and stochastic block models~\cite{peixoto2012entropy,  young2017finite}---, to develop network null models~\cite{cimini2019statistical} and to perform community detection~\cite{peixoto2014hierarchical, peixoto2017nonparametric}.
In stochastic dynamical systems, information-theoretical measures have been proposed to quantify their predictability~\cite{delsole2007predictability, kleeman2011information, scarpino2019predictability}, complexity~\cite{crutchfield1989inferring, feldman1998measures} and emergence \cite{rosas2020reconciling}. In statistical mechanics, information transmission has been shown to reach a maximum value near the critical point of spin systems in equilibrium~\cite{matsuda1996mutual, gu2007universal}.

Our objective is to combine these ideas into a single framework, motivated by recent works involving spin dynamics on lattices~\cite{barnett2013information, meijers2021behavior} and deterministic dynamics~\cite{prasse2022predicting}.
Our contributions are fourfold.
First, we use mutual information between structure and dynamics as a foundation for our general framework to quantify the structure-function relationship in complex systems. 
Second, this codependence naturally leads to the definition of measures of predictability and reconstructability.
Doing so allows us to conceptually \emph{unify} prediction and reconstruction problems, i.e., two classes of problems that are usually treated separately.
Third, we design efficient numerical techniques for evaluating these measures on large systems.
Finally, we identify a new phenomenon---a \emph{duality}---where our prediction and reconstruction capabilities can vary in opposite directions.
These findings further our understanding of the complexity of modeling networked complex systems, such as the brain, where both prediction and reconstruction techniques play critical roles.

\begin{figure}
    \centering
    \includegraphics[width=0.48\textwidth]{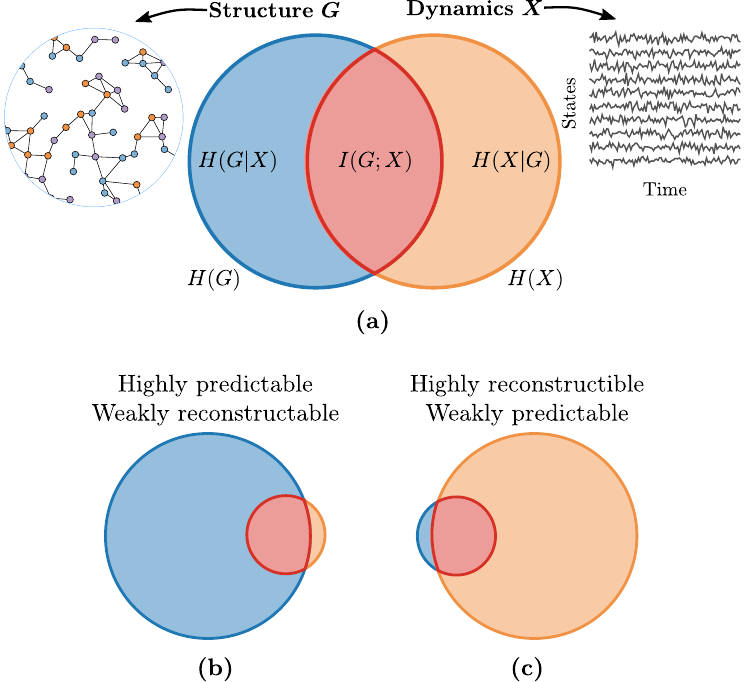}
    \caption{
    \textbf{Information diagram of dynamics on random graphs}.
    (a) Areas represent amounts of information: The entropies related to $G$ are shown on the left in blue and those related to $X$ are on the right in orange. Mutual information, in red, corresponds to the information shared by both $G$ and $X$.
    (b) The highly predictable / weakly reconstructable scenario, where $H(X) \gg H(G)$ meaning that $\mi$ contains most of the information related to the dynamics, but only a small fraction of the information related to the graph.
    (c) The reverse scenario, i.e., highly reconstructable / weakly predictable, where $H(G) \gg H(X)$ meaning that $\mi$ contains most of the information related to the graph, but only a small fraction of the information related to the dynamics..
    }
    \label{fig:info_diagram}
\end{figure}

\section{Results}

\subsection{Information theory of dynamics on random graphs}
\label{sec:information-theory}
Let us consider a random graph $G$ whose support, $\mathcal{G}_N$, consists in the set of all graphs of $N$ vertices, each of which having its respective non-zero prior probability $P(G^*)$ with $G^*\in\mathcal{G}_N$.
From the Bayesian perspective, the random graph $G$ represents our prior knowledge on the structure of the system of interest.
We also consider a stochastic process (also called a dynamics hereafter) of length $T$, noted $X$, evolving on a realization of $G$ and representing the possible dynamic states of the system.
We note $P(X\mid G)$ the probability of a random time series $X = (X_{i,t})_{i,t}$ conditioned on $G$, where $X_{i,t}$ is the random state, with support $\Omega$, of vertex $i$ at time $t$.
Together, $X$ and $G$ form a Bayesian chain $G\to X$, where the arrow indicates conditional dependence~\cite{edwards2012introduction}.

We are interested in the mutual information between $X$ and $G$---denoted $\mi$---which is a symmetric measure that quantifies the codependence between the dynamics $X$ and the structure $G$~\cite{cover2006elements},  where $\mi = 0$ when they are independent.
It is equivalently given by
\begin{subequations}\label{eq:mutual_info}
    \begin{align}
        \mi &= H(X) - H(X \mid G) \\ 
        &= H(G) - H(G \mid X)\,,
    \end{align}
\end{subequations}
where $H(G) = -\mean{\log P(G)}$ and $H(X) = -\mean{\log P(X)}$ are respectively the marginal entropies of $G$ and $X$, and $H(G\mid X) = -\mean{\log P(G\mid X)}$ and $H(X\mid G) = -\mean{\log P(X\mid G)}$ are their corresponding conditional entropies.
In the previous equations, the marginal distribution for $X$, the \emph{evidence}, is defined as $P(X) = \sum_{G^*\in\mathcal{G}_N} P(G^*)P(X\mid G^*)$, and the \emph{posterior} is obtained from Bayes' theorem as $P(G\mid X) = P(G)P(X\mid G)/P(X)$, using the given graph \emph{prior} $P(G)$ and the dynamics \emph{likelihood} $P(X\mid G)$.
In the case where $\Omega$ is a countable set (i.e., vertices have discrete dynamical states), $\mi$ is a non-negative measure bounded by $0\leq \mi \leq \min\set{H(G), H(X)}$.
Figure~\ref{fig:info_diagram}(a) provides an illustration of Eq.~\eqref{eq:mutual_info}.

\begin{figure}
    \centering
    \includegraphics[scale=1]{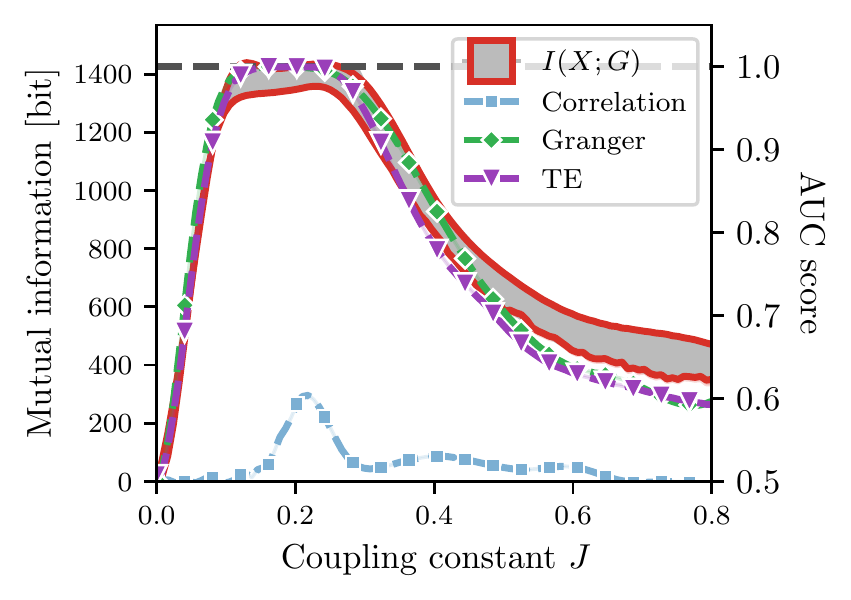}
    \caption{Comparison between the mean field (MF) estimator of $\mi$ (red circles) and the average area under the ROC curve (AUC) of the correlation matrix method~\cite{kramer2009network} (light blue squares), the Granger causality method \cite{schreiber2000measuring} (green diamonds) and the transfer entropy method \cite{seth2005causal} (purple triangles), as a function of the coupling constant for time series of length $T=100$ generated with the Glauber dynamics on Erdos-Renyi of size $N=100$ with $M=250$ edges. The mutual information is shown on the left axis and the AUC scores, on the right axis. The theoretical maximum values for both measures are shown with the one vertical dashed line. }
    \label{fig:mi-vs-heuristics}
\end{figure}

The measures presented in Eq.~\eqref{eq:mutual_info} and above can all be interpreted in the context of information theory. 
Information is generally measured in bits which in turn is interpreted as a minimal number of binary---i.e., yes/no---questions needed to convey it.
While entropy measures the uncertainty of random variables like $X$ and $G$, i.e., the minimal number of bits of information needed to determine their value, mutual information represents the reduction in uncertainty about one variable when the other is known.
The fact that it is symmetric means that this reduction goes both ways: The reduction in the dynamics uncertainty when the structure is known is equal to that of the structure when the dynamics is known.
Hence, mutual information measures the amount of information shared by both $X$ and $G$.

As an illustration, let us consider the physical example of a spin system such as the one described by the Glauber dynamics~\cite{glauber1963time} depending on $G$ (see Table~\ref{tab:dynamics}).
For a given value of the coupling parameter $J\geq0$, the spins will be more (large $J$) or less (small $J$) likely to align with their first neighbors in $G$. 
Now, suppose that $J=0$, which means that the spins will flip independently from each other and from $G$ with probability $\frac12$ at each time step. 
Hence, $H(X\mid G) = NT$ bits, corresponding to the maximum entropy of $X$: We need precisely one binary question for each spin at each time for a given structure $G$---e.g., ``\emph{Is the spin of vertex $i$ at time $t$ up?}".
When $J>0$, correlation is introduced between connected spins. 
As a result, a single question about the spin of vertex $i$ at time $t$ can provide additional information about the spins of other vertices at other times and thus, $H(X\mid G) < NT$.
The interpretation of $H(X)$ is analogous to that of $H(X\mid G)$, as it measures the number of binary questions needed to determine $X$ when the graph is unknown.
From this perspective, the mutual information $\mi$, as expressed by the difference between $H(X)$ and $H(X\mid G)$, is the reduction in the number of questions needed to predict $X$ ensuing from the knowledge of $G$. 
Hence, $\mi$ measures how much information about $X$ is determined by $G$---or how influential $G$ is over $X$---and is therefore related to its \emph{predictability}.

Similar observations can be made from the structural perspective. 
Suppose that $X$ is again the Glauber dynamics and $G$ is a random graph, where each edge exists independently with probability $p$. 
This yields $H(G) = -\binom{N}{2} [p\log p + (1 - p) \log (1 - p)]$, where $\binom{N}{2}$ is the total number of possible undirected edges.
When $p=\frac12$, we have $H(G) = \binom{N}{2}$ bits, which is again the maximum entropy of $G$. 
We therefore need precisely one binary question for each of the $\binom{N}{2}$ edges in the graph---e.g. ``\emph{Is there an edge between $i$ and $j$?}"---to completely determine its value.
When the dynamics $X$ is known, $H(G\mid X)$ is interpreted similarly to $H(G)$, but also takes into account the observation of the spins $X$ which introduces correlation between the edges of $G$. 
As a result, each bit can provide information about more than one edge, even in the case $p=\frac12$ where we a priori need one bit per possible edge to fully reconstruct $G$.
Consequently, the knowledge of $X$ reduces uncertainty about $G$ (i.e., $H(G\mid X) \leq H(G)$, see \cite[Theorem 2.6.5]{cover2006elements}). 
The difference between $H(G)$ and $H(G\mid X)$---i.e., $\mi$---thus measures how much information about $G$ is revealed by knowing $X$, which in turn is related to its \emph{reconstructability}.

In practice, we can argue that $\mi$ is related to the performance of reconstruction algorithms such as the cross-correlation matrix method~\cite{kramer2009network}.
Figure~\ref{fig:mi-vs-heuristics} provides evidence of this relationship by comparing the performance of common reconstruction algorithms with the mutual information (see Section~\ref{app:estimators} for detail).
Indeed, when $I(G;X) = 0$, the score of all algorithms is comparable to that of a random edge/no-edge classifier between each pair of nodes (with an AUC of 0.5) and all methods seems to peak around the mutual information maximum before decreasing again for larger coupling.
Note that a similar comparison analysis could, in principle, be carried out for the predictability as well, but the problem of measuring a graph influences over a process is less documented than that of graph reconstruction, which makes it harder to actually investigate.

The mutual information $\mi$ is therefore both a measure of predictability and reconstructability, thereby unifying these two concepts under one single framework.
We say that a system is perfectly predictable when the mutual information contains all the information about $X$, that is when $\mi = H(X)$ [see Fig.~\ref{fig:info_diagram}(b)].
Likewise, we say that it is perfectly reconstructable when $\mi = H(G)$ [see Fig.~\ref{fig:info_diagram}(c)].
Consequently, whenever $\mi>0$, we expect the system to be predictable and reconstructable to a certain degree.
Otherwise, when $\mi = 0$, the system is said both unpredictable and unreconstructable.
Yet, $\mi$ by itself is hardly comparable from one system to another. 
Indeed, a specific value of $I(G;X)$ may correspond to opposing scenarios when it comes to predictability and reconstructability, as shown in Fig.~\ref{fig:info_diagram}(b-c). 
Thus, it is more convenient to use normalized quantities such as the uncertainty coefficients
\begin{subequations}
    \begin{align}
        \pred &= \frac{\mi}{H(X)}\,,\label{eq:pred}\\
        \recon &= \frac{\mi}{H(G)} \,, \label{eq:recon}
    \end{align}
\end{subequations}
as measures, bounded between 0 and 1, of the relative degrees of \emph{predictability} and \emph{reconstructability}, respectively.

The interpretation of the reconstructability measure $U(G\mid X)$ is straightforward: It is the fraction of the structural information that can be recovered from the dynamics.
Similarly, the predictability measure $U(X\mid G)$ is interpreted as the fraction of dynamical information that is determined---or influenced---by the structure.
Strictly speaking, $U(X\mid G)$ is not a standard measure of predictability, i.e., a measure that quantifies the influence of the initial conditions of a system, or its past states, over its future states \cite{Feder1994,delsole2007predictability, kleeman2011information, giannakis2012information}.
However, as the graph can be interpreted as an element of the initial conditions that remains constant throughout the process, our predictability measure $U(X\mid G)$ is compatible with the terminology of ``predictability".
It is nevertheless possible to utilize our framework towards predictability measures explicitly dependent on the past, but these do not highlight the relationship between structure and dynamics as clearly as the measures presented in this section.
We refer to Appendix~\ref{app:past_mi} for further details.

\begin{figure*}
    \centering
    \includegraphics[scale=1]{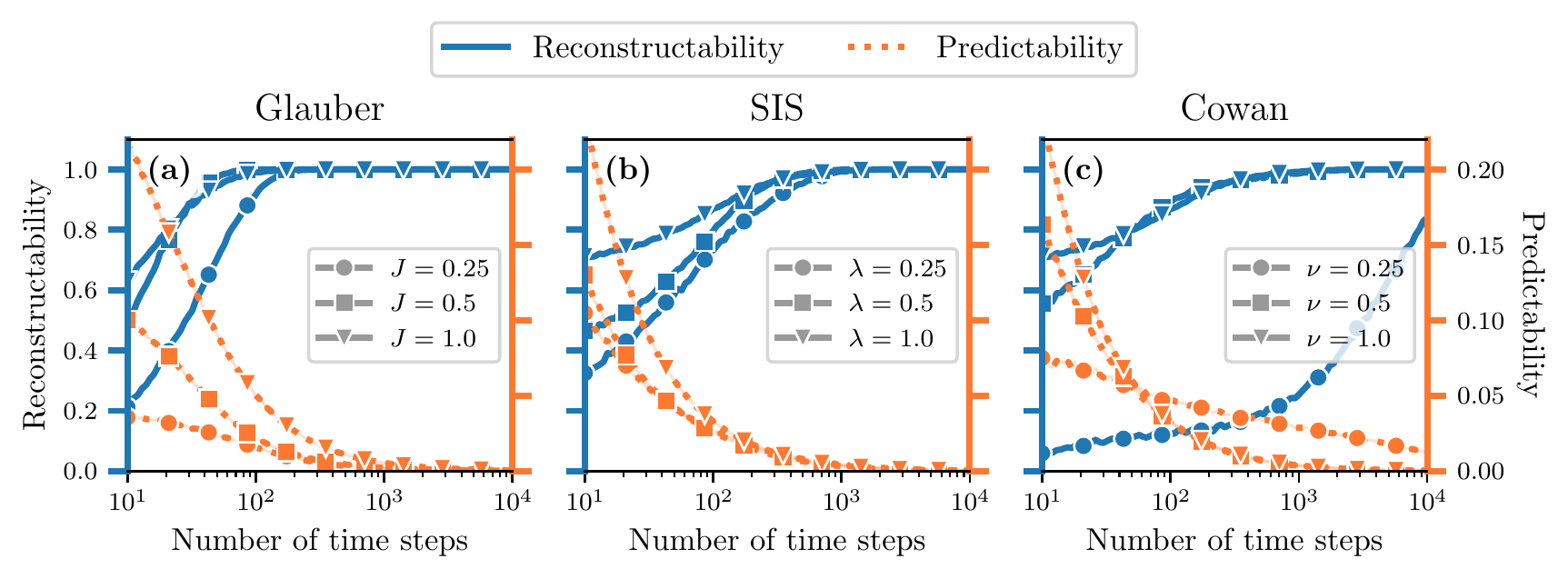}
    \caption{\textbf{$T$-duality in binary dynamics evolving on small Erd\H{o}s-R\'enyi random graphs}:
    (a) Glauber dynamics,
    (b) SIS dynamics and
    (c) Cowan dynamics.
    Each panel shows the reconstructability coefficient $\recon\in[0,1]$ (blue) and the predictability coefficient $\pred\in[0,1]$ (orange) as a function of the number of time steps $T$.
    We used graphs of $N=5$ vertices and $E=5$ edges, meaning an average degree of $\avgk = 2$.
    Each symbol corresponds to the average value measured over $1000$ samples.
    We also show different values of the coupling parameters---normalized by the average degree---using different symbols: (a) $J\avgk\in\set{\frac{1}{2}, 1, 2}$ for Glauber, (b) $\lambda\avgk\in\set{1, 2, 4}$ for SIS and (c) $\nu\avgk\in\set{1, 2, 4}$ for Cowan.
    }
    \label{fig:exact-duality-timestep}
\end{figure*}

\subsection{\texorpdfstring{$\theta$}{Theta}-Duality between predictability and reconstructability}
\label{sec:duality}
Predictability and reconstructability in dynamics on random graphs offer two perspectives of the same information shared by $G$ and $X$---two sides of the same coin.
However, it does not mean that predictability and reconstructability go hand in hand even though they are related: A high value of $\recon$ does not necessarily imply a high value of $\pred$, which can somewhat be counterintuitive.
In other words, a maximally influential structure, with respect to the dynamics, will not necessarily be easier to reconstruct.
This observation is well illustrated by Figs.~\ref{fig:info_diagram}(b)--(c), where $\recon$ and $\pred$ can take opposing values, depending on $H(G)$ and $H(X)$, for a same value of $\mi$.

As an example, let us consider $X$ to be a Markov chain evolving on a random graph $G$ for different values of the number of time steps, $T$.
Theorem \ref{thm:universality} (see App.~\ref{app:universality}) states that, for any Markov chain whose entropy rate is non-zero and for sufficiently large $T$, $\recon$ is an increasing function of $T$, while $\pred$ is a decreasing one.
This is a consequence of the fact that the mutual information is strictly increasing with $T$, and so is $\recon$ whenever $H(G)$ is independent of $T$. 
Yet, we show in App.~\ref{app:universality} that $\mi$ increases more slowly than $H(X)$ with $T$, which results in a decreasing $\pred$.
We refer to this opposing behavior as a \emph{duality} between $\recon$ and $\pred$ with respect to $T$, or a $T$-duality for short \footnote{Not to be confused with target space duality in string theory~\cite{giveon1994target}.}.

Figure~\ref{fig:exact-duality-timestep} illustrates the universality of the $T$-duality using different binary Markov chains (i.e., $\Omega = \set{0, 1}$).  In each of these chains, the probability $P(X\mid G)$ is
\begin{equation}
    P(X\mid G) = P(X_1)\prod_{t=1}^{T-1} P\left(X_{t + 1} \mid X_{t}, G\right)\,,
\end{equation}
where
\begin{widetext}
    \begin{equation}
        \begin{split}
            P\left(X_{t + 1} \mid X_{t}, G\right) = \prod_{i=1}^N \bigg\{
            &\big[\alpha(n_{i,t}, m_{i,t})\big]^{(1 - X_{i, t}) X_{i, t + 1}}
            \big[1 - \alpha(n_{i,t}, m_{i,t})\big]^{(1 - X_{i, t}) (1 - X_{i, t + 1})} \\
            &\big[\beta(n_{i,t}, m_{i,t})\big]^{X_{i, t} (1 - X_{i, t + 1})}
            \big[1 -\beta(n_{i,t}, m_{i,t})\big]^{X_{i, t} X_{i, t + 1}}
            \bigg\}
        \end{split}
    \end{equation}
\end{widetext}
is the transition probability from state $X_t$ to state $X_{t + 1}$.
We also denote the activation ($0\to1$) and the deactivation ($1\to0)$ probability functions with $\alpha(n_{i, t}, m_{i, t})$ and $\beta(n_{i, t}, m_{i, t})$, respectively, where $n_{i, t}$ and $m_{i, t}$ denote the number of active and inactive neighbors of vertex $i$ at time $t$.

We consider three well-known Markov chain models of different origins: The Glauber dynamics, the Suspcetible-Infectious-Susceptible (SIS) dynamics and the Cowan dynamics.
The aforementioned Glauber dynamics~\cite{glauber1963time}, which have been used to describe the time-reversible evolution of magnetic spins aligning in a crystal, have been tremendously studied because of its critical behavior and its phase transition.
Its stationary distribution is given by the Ising model which has found many applications in condensed-matter physics~\cite{mezard2009information} and statistical machine learning~\cite{binder2010monte,edwards2012introduction}, to name a few.
The SIS dynamics is a canonical model in network epidemiology \cite{pastor2015epidemic} often used for modeling influenza-like disease \cite{anderson1992infectious}, where periods of immunity after recovery are short.
In this model, susceptible (or inactive) vertices get infected by each of their infected (active) first neighbors, with a constant transmission probability, and recover from the disease with a constant recovery probability.
The simplicity of the SIS model has allowed for deep mathematical analysis of its absorbing-state phase transition \cite{pastor2001epidemic, ferreira2012epidemic, st2018phase}.
Finally, the Cowan dynamics~\cite{cowan1990stochastic} has been proposed to model the neuronal activity in the brain.
In this model, quiescent neurons fire if their input current, coming from their firing neighbors, is above a given threshold.
Its mean-field approximation \cite{Painchaud2022} reduces to the Wilson-Cowan dynamics \cite{Wilson1972}, one of the most influential models in neuroscience \cite{destexhe2009wilson}.
For each model, we can identify an inactive state---down, susceptible or quiescent---and an active one---up, infectious or firing.
The corresponding activation and deactivation probabilities are given in Table~\ref{tab:dynamics}.

Figure~\ref{fig:exact-duality-timestep} numerically supports Theorem~\ref{thm:universality} and clearly illustrates the $T$-duality for each dynamics and with different values of their parameters.
We used the Erd\H{o}s-R\'enyi model as the random graph on which these dynamics evolve. The support $\mathcal{G}_N$ is the set of all simple graphs of $N$ vertices with $E$ edges, and
\begin{equation}
    P(G) = \binom{\binom{N}{2}}{E}^{-1}\,.
\end{equation}

It is also important to note that the $T$-duality seems to persist for the past-dependent measures presented in Appendix~\ref{app:past_mi}, as illustrated by Fig.~\ref{fig:figure6}, for different values of $\tau$, which we recall is the length of the past Markov chain. 
However, it does not hold for all values of $\tau$, especially those that scale with $T$ such that $\tau = T - \xi$, where $\xi < T$ is constant.
Hence, it is tantalizing to conjecture that there exists a scaling $g(T)$ such that, when $\tau$ is dominated by $g(T)$, the $T$-duality can persist and it cannot otherwise.
More details are available in Appendix~\ref{app:past_mi}.

\begin{table}[t]
\setlength\tabcolsep{6pt}
    \begin{tabular}{c c c c}
        \hline \hline
        Dynamics & $\alpha(n, m)$ & $\beta(n, m)$ & Coupling\\ \hline \\
        Glauber~\cite{glauber1963time} & $\sigma[2J(n - m)]$ & $\sigma[2J(m - n)]$ & $J$\\ \\
        SIS~\cite{pastor2015epidemic} & $1 - \left(1 - \frac{\lambda}{\beta}\right)^m$ & $\beta$ & $\lambda$\\ \\
        Cowan~\cite{cowan1990stochastic}& $\sigma[a( \nu m - \mu)]$ & $\beta$ & $\nu$ \\\\ \hline\hline
    \end{tabular}
    \caption{Activation and deactivation probability functions, $\alpha(n, m)$ and  $\beta(n, m)$, respectively, for the binary dynamics considered in this study, where $n$ corresponds to the number of inactive neighbors whose states are $0$, and $m$ corresponds to the number of active neighbors whose states are $1$. We define $\sigma(x) = [\exp(-x) + 1]^{-1}$ as the logistic function. Some of these parameters are fixed throughout the paper: $\beta = 0.5$ for SIS and Cowan, and $a = 7$ and $\mu = 1$ for Cowan. The coupling parameters ($J$ for Glauber, $\lambda$ for SIS and $\nu$ for Cowan) are specified in each figure. Also, to prevent the SIS dynamics from being completely inactive, we allow the inactive vertices to spontaneously activate with probability $\epsilon = 10^{-3}$~\cite{van2012epidemics}.}
    \label{tab:dynamics}
\end{table}

The observation of the $T$-duality begs for a more general definition of duality for any arbitrary parameter $\theta$ (see Appendix~\ref{app:duality}).
In fact, we say that $\recon$ and $\pred$ are \emph{dual} with respect to $\theta$, or \emph{$\theta$-dual}, in an interval $\Theta$ if and only if the signs of their derivative with respect to $\theta$ are different for every $\theta^{*}\in\Theta$:
\begin{equation}
    \left[\pder{\recon}{\theta}\pder{\pred}{\theta}\right]_{\theta=\theta^*} < 0\,.
\end{equation}
This criterion formally relies on the existence of regions $\Theta$ where the variations of $\recon$ and $\pred$ with respect to $\theta$ are contradictory, regardless of their amplitude.
We use this criterion to relate the existence of extrema of $\recon$ and $\pred$ with that of regions of $\theta$-duality (see Lemma~\ref{lem:extrema} in App.~\ref{app:duality}), and to prove Theorem~\ref{thm:universality}.

Knowing the existence of the $T$-duality and having a general definition of $\theta$-duality, it is now natural to ask if there exist other types of $\theta$-dualities in dynamics on random graphs. 
A large variety of parameters could lead to interesting $\theta$-dualities---some controlling the general behavior of the dynamics, and others controlling some structural properties of the random graph which, in turn, also impact the dynamics. 
Most of them require the system to be larger, if the effects over $X$ and $G$ of varying $\theta$ are to be significant (e.g. phase transitions). 
However, in high-dimensional systems, theoretical and numerical challenges arise in the evaluation of the reconstructability and the predictability, which complicate the search for dualities. We address this problem in the next section.

\begin{figure*}
    \centering
    \includegraphics[scale=1]{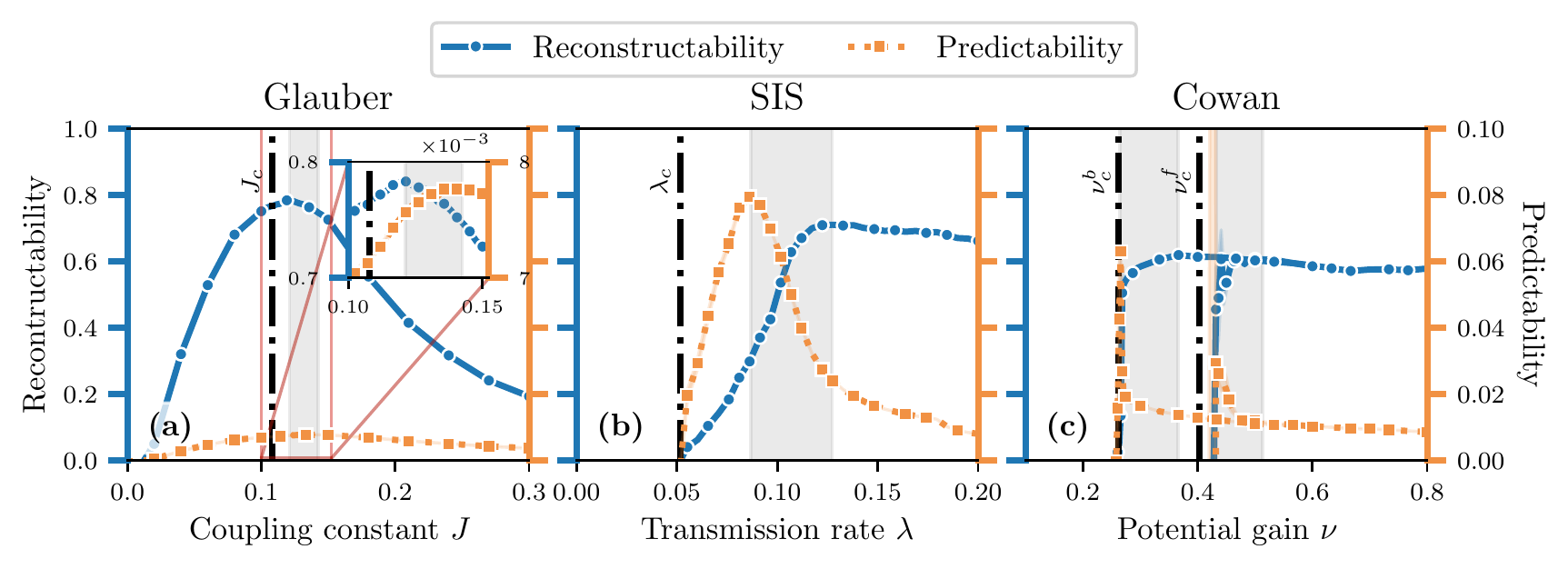}
    \caption{\textbf{Dynamics evolving on configuration model graphs with geometric degree distribution}:
    (a) Glauber dynamics,
    (b) SIS dynamics and
    (c) Cowan dynamics.
    We generated graphs of $N=1000$ vertices, where the degree distribution $\rho(k) = (1 - p)p^k$ is a geometric distribution with $p = \mean{k}/(1 + \mean{k})$ and $\mean{k}=5$, and time series of length $T=2000$. See Table~\ref{tab:dynamics} for the remaining parameters.
    Similar to Fig.~\ref{fig:exact-duality-timestep}, $\recon$ is shown in blue (left axis) and $\pred$ is shown in orange (right axis).
    We show, for each dynamics, the uncertainty coefficients as a function of the coupling parameter: $J$ for Glauber, $\lambda$ for SIS and $\nu$ for Cowan.
    The vertical dotted-dashed lines correspond to the phase transition thresholds of each dynamics, which are estimated from Monte Carlo simulations (see Appendix~\ref{app:thresholds}).
    For the Cowan dynamics, the forward and backward branches are shown with their corresponding thresholds and dual regions (see main text).
    }
    \label{fig:approx-duality-coupling}
\end{figure*}

\subsection{Duality and criticality}
Despite their different nature and range of applications, the three models presented in Table~\ref{tab:dynamics} share several properties of interest.  For instance, each model has a coupling parameter that controls the influence of the state of the first neighbors on the transition probabilities.  They also all feature a phase transition in the infinite size limit whose position is determined by the coupling parameter (see Fig.~\ref{fig:figure5} and App.~\ref{app:thresholds}).
We now investigate the influence of criticality over the existence of $\theta$-dualities, where $\theta$ is a coupling parameter.

For the Glauber dynamics, this parameter is the coupling constant $J$, which dictates the reduction (increase) in the total energy of a spin configuration when two neighboring spins are parallel (antiparallel).
The Glauber dynamics features a continuous phase transition at a critical point $J_{\mathrm{c}}$ between a disordered and an ordered phase, where for $J<J_{\mathrm{c}}$ the spins are disordered resulting in a vanishing magnetization, and for which this magnetization is non-zero when $J>J_{\mathrm{c}}$. 
For the SIS dynamics, it is the transmission rate $\lambda$ that acts as a coupling parameter. 
Like the Glauber dynamics, the SIS dynamics possesses a continuous phase transition where, when $\lambda<\lambda_{\mathrm{c}}$, the system reaches an absorbing---or inactive---state from which it cannot escape, and an active state, when $\lambda > \lambda_{\mathrm{c}}$, where a non-zero fraction of the vertices remain active over time
\footnote{
    It is not strictly accurate to say that our considered version of the SIS dynamics reaches a true absorbing state, since we allow for self-infection $\epsilon$ which allows it to escape the completely inactive state. 
    Instead, it reaches a metastable state where most of the vertices are asymptotically inactive. 
    However, it can be shown that the two phase transitions are quite similar for small $\epsilon$~\cite{van2012epidemics}.
}.
The Cowan dynamics can both feature a continuous or a first-order phase transition between an inactive and an active phase depending on the value of slope $a$, for which the coupling parameter is $\nu$, i.e., the potential gain for each firing neighbors.
The continuous and first-order phase transitions of the Cowan dynamics are quite different in that the latter is characterized by two thresholds, namely the forward and backward thresholds $\nu^{\mathrm{b}}_{\mathrm{c}} < \nu^{\mathrm{f}}_{\mathrm{c}}$, respectively (see Appendix~\ref{app:thresholds} for further details).
Hence, the Cowan dynamics has a first-order phase transition that exhibits a bistable region $\nu\in( \nu^{\mathrm{b}}_{\mathrm{c}},  \nu^{\mathrm{f}}_{\mathrm{c}})$, where both the inactive and active phases are reachable depending on the initial conditions.

To account for the heterogeneous network structure observed in a wide range of complex systems~\cite{barabasi2013network}, we simulate the dynamics on the configuration model, a random graph whose---potentially heterogeneous---degree sequence $\vec{k}$ is fixed and whose support $\mathcal{G}_N$ corresponds to the set of all loopy multigraphs of degree sequence $\vec{k}$. The probability of a graph $G^*$ in this ensemble is
\begin{equation}
    P(G^*) = \frac{(2E)!!}{(2E)!} \frac{\prod_{i} k_i!}{\prod_{i<j}{M_{ij}!} \prod_i M_{ii}!!}\,,
\end{equation}
where $M_{ij}$ counts the number of edges connecting vertices $i$ and $j$ in the multigraph $G^*$ and $2E = \sum_{i}k_i$ is the number of half-edges in $G^*$.
Like the Erd\H{o}s-R\'enyi model, the configuration model fixes the number of edges, but also fixes the degree distribution $\rho(k)$.

Figure~\ref{fig:approx-duality-coupling} shows the predictability and reconstructability of the three dynamics evolving on graphs drawn from the configuration model whose distribution, $\rho(k)$, is geometric, as estimated by the MF estimator.
First, these results allow us to compare the dynamics with one another.
For example, on the one hand, the Glauber dynamics is globally less predictable than the other two, since its predictability coefficient is overall smaller.
In other words, the knowledge of a graph $G^*$ provides less information about $X^*$ in the Glauber dynamics in comparison with the others, relatively to the total amount of information needed to reconstruct $X^*$.
This is related to the time reversibility of the Glauber dynamics, which allows any vertex to transition from the inactive to the active state (and vice versa) with non-zero probability, at any time, effectively making the Glauber dynamics more random than the others---i.e. $H(X)$ is greater for Glauber than the other processes.
On the other hand, the SIS and Cowan dynamics are portrayed by the MF estimator as practically unpredictable and unreconstructable when their coupling parameter is below their respective critical point. 
This precisely occurs in the inactive phase, where no mutual information can be generated after a short time, when the system reaches the inactive state.
By contrast, the Glauber dynamics does not reach an inactive state below its critical point, which explains the gradual increase in predictability and reconstructability in that region.

Several additional observations are worth making.
All dynamics exhibit maxima for $\pred$ and $\recon$ which delineate a region of duality illustrated by the shaded areas (two for Cowan, that is one for each branch).
These regions are close to, but systematically above, their respective phase transition thresholds.
A similar phenomenon in spin dynamics on non-random lattices has been reported by previous works~\cite{barnett2013information,meijers2021behavior}, in which the information transmission rate between spins---a measure akin to $\mi$---is maximized above the critical point.
Our numerical results are consistent with theirs, and suggest that their findings regarding near-critical systems even apply beyond spin dynamics on fixed lattices, to other types of processes on more heterogeneous and random structures.

\section{Discussion}
In this work, we used information theory to characterize the structure-function relationship with mutual information.
We showed how mutual information is a natural starting point to define both predictability and reconstructability in dynamics on networks, in turn showing how they are intrinsically unified.
Our approach is quite general allowing the exploration of different configurations of dynamics on networks of the form $G\to X$, thus varying the nature of the process itself as well as the random graph on which it evolves.
Our framework could be extended to adaptive systems~\cite{gross2008adaptive, marceau2010adaptive, scarpino2016effect, Khaledi-Nasab2021} where both $X$ and $G$ influence each other (i.e., $X\leftrightarrow G$).
The relationship between $X$ and $G$ could also go the other way around: A system in which $X$ generates a graph $G$ (i.e., $X\to G$).
Hyperbolic graphs~\cite{krioukov2010hyperbolic, boguna2021network} falls into this category, where $X$ represents a set of coordinates, and our framework could be extended to quantifying the feasibility of network geometry inference~\cite{boguna2010sustaining, papadopoulos2015network, garcia2019mercator}.

We found efficient ways to estimate the mutual information numerically, thus allowing us to investigate relatively large systems.
More work on this front is required, however, since the evaluation of these estimators remains quite computationally costly.
It would be worth investigating simpler models, for which it is possible to analytically---or at least approximately---evaluate $\pred$ and $\recon$.
In particular, dimension reduction methods~\cite{laurence2019spectral, thibeault2020threefold, thibeault2022boran} and approximate master equations~\cite{gleeson2011high, st-onge2021pre} are promising avenues for obtaining reliable approximations of $\mi$, $\pred$ and $\recon$.

Central to our findings is the peculiar discovery that predictability and reconstructability are not only related, but sometimes dual to one another.
We proved that such $\theta$-duality appears when the length of the processes changes and presented numerical evidence of duality near the criticality in three different dynamics on random heterogeneous networks.
These findings generalize and formalize---while being consistent with---previous works~\cite{barnett2013information, meijers2021behavior} and suggest that criticality in these systems is intrinsically related to the duality.

From a practical perspective, the existence of such a $\theta$-duality can be critical to network modeling applications, since it also suggests a predictability-reconstructability trade-off.
On the one hand, we can choose this parameter $\theta$ such that the uncertainty of the reconstructed structure is minimized, at the expense of having a less informative structure with respect to the dynamics.
On the other hand, we can consider the reverse case, where the process is maximally influenced by the inferred structure, whose uncertainty is nevertheless not minimized.
Analogous to the position-momentum duality in the Heisenberg uncertainty principle of quantum mechanics, the predictability-reconstructability duality must be accounted for in our network models if we are to disentangle complex systems.

\section*{Acknowledgments}
We are grateful to Guillaume St-Onge and Vincent Painchaud for useful comments, and to Simon Lizotte and François Thibault for their help in designing the software. This work was supported by the Fonds de recherche du Qu\'ebec -- Nature et technologies (VT, PD), the Conseil de recherches en sciences naturelles et en g\'enie du Canada (CM, VT, AA, PD), and the Sentinelle Nord program of Universit\'e Laval, funded by the Fonds d’excellence en recherche Apog\'ee Canada (CM, VT, AA, PD).
We acknowledge Calcul Qu\'ebec and Compute Canada for their technical support and computing infrastructures.


\newpage
\section{Materials and Methods}
\subsection{Formal definition of \texorpdfstring{$\theta$}{Theta}-duality}
\label{app:duality}
In what follows, we define the duality between predictability and reconstructability by taking a more general stance: Instead of considering a stochastic process $X$ evolving on a random graph $G$, we let $X$ be conditioned on an arbitrary discrete random variable $Y$. First, we define the local duality of the uncertainty coefficients. The latter are considered as continuously differentiable functions with respect to a parameter $\theta$ whose domain is some non-empty interval of the real line.
%
\begin{definition}[Local duality]
    The uncertainty coefficients $U(X\mid Y)$ and $U(Y\mid X)$ are \emph{locally dual} with respect to $\theta$ at $\theta = \theta^*$ if and only if
    \begin{equation}\label{eq:duality-criterion}
        \left[\pder{U(X\mid Y)}{\theta} \pder{U(Y\mid X)}{\theta}\right]_{\theta=\theta^*} < 0\,.
    \end{equation}
\end{definition}
The definition of the $\theta$-duality, a global property, follows that of the local duality.
\begin{definition}[$\theta$-Duality]
    The uncertainty coefficients $U(X\mid Y)$ and $U(Y\mid X)$ are \emph{dual} with respect to $\theta$, or $\theta$\textit{-dual}, in the interval $\Theta$ if and only if they are locally dual for all values of $\theta^*$ in $\Theta$.
\end{definition}
From these definitions, we relate the presence of extrema of $U(X\mid Y)$ and $U(Y\mid X)$ with the existence of a $\theta$-duality.
\begin{lemma}\label{lem:extrema}
    Let $\Theta$ be a non-empty subinterval of the variable $\theta$ whose one endpoint is a local extremum of $U(X\mid Y)$ and the other, a local extremum of $U(Y\mid X)$.
    Moreover, suppose that $U(X\mid Y)$ and $U(Y\mid X)$ do not have critical points in $\Theta$. Then the extrema points delineate a region of $\theta$-duality if and only if they are both maxima (or both minima).
\end{lemma}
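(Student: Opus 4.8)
The plan is to reduce the statement to elementary one–variable calculus on $\Theta$. Write $f = U(X\mid Y)$ and $g = U(Y\mid X)$, let $a<b$ be the two endpoints of $\Theta$ (reorienting $\theta$ if necessary), and, relabelling $X$ and $Y$ if needed, arrange that $a$ is the extremum of $f$ and $b$ the extremum of $g$. The first step is to note that on the open interval $(a,b)$ neither $f$ nor $g$ has a critical point, so, since both are continuously differentiable, $f'$ and $g'$ are continuous and nowhere zero on $(a,b)$; by the intermediate value theorem each therefore keeps a constant sign there. Hence $f$ and $g$ are strictly monotone on $(a,b)$, and the local‑duality condition $f'g'<0$ holds either at every point of $(a,b)$ or at none of them — so it suffices to determine the two constant signs.

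The second step extracts those signs from the nature of the endpoint extrema using only continuity. If $a$ is a local maximum of $f$, then for $\theta$ slightly larger than $a$ one has $f(\theta)\le f(a)$, whereas strict monotonicity on $(a,b)$ together with $f(a)=\lim_{s\to a^+}f(s)$ would force $f(\theta)>f(a)$ on all of $(a,b)$ if $f$ were increasing; thus $f$ must be decreasing, i.e. $f'<0$ on $(a,b)$, and symmetrically a local minimum at $a$ gives $f'>0$. The mirror argument at the right endpoint shows that a local maximum of $g$ at $b$ forces $g$ increasing on $(a,b)$, hence $g'>0$, while a local minimum at $b$ forces $g'<0$. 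Running through the four combinations, $f'g'<0$ throughout $(a,b)$ exactly when $a$ and $b$ are either both maxima or both minima, which is the assertion.

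There remains the reading of ``delineate a region of $\theta$‑duality'': the region in question is the open interval $(a,b)$, and the endpoints themselves are necessarily excluded, because at a local extremum lying in the interior of the domain of $\theta$ the corresponding derivative vanishes, so $f'g'=0$ at $a$ and at $b$ and the strict inequality of the local‑duality criterion fails precisely there. I would make this explicit, so that ``the extrema points delineate a region of duality'' is understood as ``$(a,b)$ is a maximal region of $\theta$‑duality, with boundary given by the two extrema,'' and correspondingly interpret the hypothesis ``no critical points in $\Theta$'' as referring to the open interval (its endpoints being critical points of $f$ and of $g$, respectively).

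The only real obstacle is this endpoint bookkeeping: fixing the orientation and the labelling (which of $f,g$ is extremal at which end) at the outset so the sign table is unambiguous, using that a $C^1$ function has vanishing derivative at an interior local extremum, and being careful that the constant‑sign / strict‑monotonicity argument is applied on the open interval rather than the closed one. Once these conventions are in place, the proof is just the intermediate‑value‑theorem observation of Step~1 followed by the four‑case sign check of Step~2.
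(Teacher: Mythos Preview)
Your proposal is correct and follows essentially the same approach as the paper: fix the ordering of the two endpoint extrema, use the absence of critical points in the open interval to conclude each derivative has constant sign there, and then read off those signs from the type (max/min) of the endpoint extrema to get the four-case sign table. Your version is somewhat more careful about the endpoint bookkeeping and the open-versus-closed interval distinction than the paper's, but the argument is the same.
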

\begin{proof}
    Let $\theta_R$ and $\theta_P$ be the extrema points of $U(Y\mid X)$ and $U(X\mid Y)$, respectively. Thus
    \begin{equation}
        \left.\pder{U(Y\mid X)}{\theta}\right|_{\theta=\theta_R} = \left.\pder{U(X\mid Y)}{\theta}\right|_{\theta=\theta_P} = 0\,.
    \end{equation}
    Suppose for a moment that $\theta_R < \theta_P$ and let $\Theta=(\theta_R, \theta_P)$.
    This implies that $\pder{U(Y\mid X)}{\theta}$ changes sign at $\theta_R$, before $\pder{U(X\mid Y)}{\theta}$, for which the sign change happens at $\theta_P$.

    On the one hand, if the extrema points $\theta_R$ and $\theta_P$ are both maxima (or minima), then $\pder{U(Y\mid X)}{\theta}$ and $\pder{U(X\mid Y)}{\theta}$ have different signs in $\Theta$.
    Hence, inequality~\eqref{eq:duality-criterion} is verified in this region.
    The uncertainty coefficients are therefore $\theta$-dual in $\Theta$.
    
    On the other hand, if the uncertainty coefficients are $\theta$-dual in $\Theta$, then inequality~\eqref{eq:duality-criterion} is satisfied in this interval. This in turn implies that either $U(Y\mid X)$ decreases in $\Theta$ while $U(X\mid Y)$ increases or $U(Y\mid X)$ increases in $\Theta$ while $U(X\mid Y)$ decreases. Therefore, the endpoints of $\Theta$ are either both maximum points or both minimum points.  
    
    Finally, repeating the same arguments with $\theta_R > \theta_P$  and $\Theta=(\theta_P, \theta_R)$ leads to the same conclusions about $\theta$-duality of $U(X\mid Y)$ and $U(Y\mid X)$ in $\Theta$.
\end{proof}

\subsection{Universality of the \texorpdfstring{$T$}{T}-duality}
\label{app:universality}
We demonstrate the universality of the $T$-duality, where $T$ is the number of steps in the process $X$. First, we need to show that the mutual information is a monotonically increasing function of $T$.

\begin{lemma}\label{lem:monotonicity}
    Let $X=(X_1, X_2,\cdots, X_T)$ be a Markov chain of length $T$ whose transition probabilities are conditional to some discrete random variable $Y$ that is independent of $T$ and such that $H(X_{t+1} |X_{t})>0$ for all $t\in\{1,\ldots,T-1\}$. Suppose moreover that the state spaces of $X$ and $Y$ are finite. Then the mutual information $I(X;Y)$ is nonzero and monotonically increasing with $T\in\mathbb{Z}_+$.
\end{lemma}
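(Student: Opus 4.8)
The plan is to establish both claims---nonzero and monotonically increasing---by studying how $I(X;Y)$ changes when we append one more step to the chain, i.e., by comparing $I(X_1,\dots,X_{T+1};Y)$ with $I(X_1,\dots,X_T;Y)$. First I would write $X^{(T)} = (X_1,\dots,X_T)$ and use the chain rule for mutual information to get
\begin{equation}
    I(X^{(T+1)};Y) = I(X^{(T)};Y) + I(X_{T+1};Y\mid X^{(T)})\,,
\end{equation}
so that monotonicity (in the weak sense) is immediate from nonnegativity of conditional mutual information. The real content is to show the increment $I(X_{T+1};Y\mid X^{(T)})$ is \emph{strictly} positive, and separately that the base case $I(X^{(1)};Y) = I(X_1;Y)$ together with the increments produces a nonzero value. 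The Markov property $X_{T+1} \perp X^{(T-1)} \mid (X_T,Y)$ lets me simplify: conditioned on $X^{(T)}$, the law of $X_{T+1}$ depends only on $X_T$ and $Y$, so $I(X_{T+1};Y\mid X^{(T)})$ reduces to an average over $X_T = x_t$ of $I(X_{T+1};Y\mid X_T = x_t)$, where now the $Y$-distribution is the posterior given the observed history.

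The key step---and the main obstacle---is arguing that this increment does not vanish. The hypothesis $H(X_{t+1}\mid X_t) > 0$ ensures the transition is genuinely stochastic: for some reachable state $x_t$, the conditional distribution of $X_{t+1}$ given $X_t = x_t$ is nondegenerate. What I actually need is that this transition law depends on $Y$, i.e., that there exist two values $y, y'$ of $Y$ (both with positive posterior weight given the history) such that $P(X_{t+1}\mid X_t = x_t, Y = y) \neq P(X_{t+1}\mid X_t = x_t, Y = y')$. Here I would invoke that the chain already carries information about $Y$ at earlier steps---more carefully, I would set this up as an induction: if $I(X^{(T)};Y) > 0$ then the posterior $P(Y\mid X^{(T)})$ is non-constant on a positive-probability event, and combining this with the fact that the one-step kernel genuinely depends on $Y$ (which is what makes the chain ``conditional to $Y$'' in a nontrivial way, and is the natural reading of the finiteness plus $H(X_{t+1}\mid X_t)>0$ hypotheses in context) forces the increment to be strictly positive. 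The base case $I(X_1;Y) > 0$ should follow from the same genuine-dependence property applied to the first transition, or be assumed as part of what ``Markov chain conditional to $Y$'' means; if $I(X_1;Y)=0$ is allowed, I would instead show the first strictly positive increment occurs at the first step $t$ where the kernel depends on $Y$ and on a state reachable with positive probability.

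The technical lemma doing the heavy lifting is: if $\mu$ is a distribution on the finite set of $Y$-values that is not a point mass, and $y\mapsto K_y(\cdot)$ is a family of transition distributions with $K_y \neq K_{y'}$ for some $y,y'$ in the support of $\mu$, then $I(Z;Y) > 0$ where $Y\sim\mu$ and $Z\mid Y\sim K_Y$. This is standard---mutual information is zero iff $Z$ and $Y$ are independent, which would force $K_y = K_{y'}$ for all $y,y'$ in $\text{supp}(\mu)$, a contradiction. I would state and prove this as a small self-contained fact (it is essentially the strict positivity of KL divergence), then feed it the posterior $\mu = P(Y\mid X^{(T)} = \text{history})$ and kernel $K_y = P(X_{T+1}\mid X_T = x_t, Y = y)$. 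Averaging over histories with the correct weights gives $I(X_{T+1};Y\mid X^{(T)}) > 0$, completing both the strict monotonicity and, via the base case, the nonvanishing. Finiteness of the state spaces is used throughout to guarantee all these information quantities are finite and the chain rule is valid, and to ensure ``positive-probability event'' arguments are not vacuous.
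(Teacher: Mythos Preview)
Your approach---chain rule $I(X^{(T+1)};Y) = I(X^{(T)};Y) + I(X_{T+1};Y\mid X^{(T)})$ followed by strict positivity of the increment---is exactly the paper's. The paper writes the increment as $H(X_T\mid X') - H(X_T\mid X',Y)$ and asserts that its vanishing would force $I(X';Y)=0$; your technical lemma about a nondegenerate $\mu$ together with a $Y$-dependent kernel is a cleaner packaging of the same step.

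There is, however, a genuine gap that your proposal and the paper's proof share. Your lemma requires the posterior $\mu=P(Y\mid X^{(T)}=\text{history})$ not to be a point mass, but the induction hypothesis $I(X^{(T)};Y)>0$ does not ensure this---indeed $I(X^{(T)};Y)=H(Y)$ precisely when $H(Y\mid X^{(T)})=0$, i.e.\ when the history already determines $Y$, and then the increment $I(X_{T+1};Y\mid X^{(T)})$ vanishes regardless of how the kernel depends on $Y$. Concretely: take $Y\in\{0,1\}$ uniform, state space $\{a,b,c,d\}$, $X_1$ uniform and independent of $Y$; under $Y=0$ every state transitions uniformly into $\{a,b\}$, under $Y=1$ uniformly into $\{c,d\}$. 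One checks $H(X_{t+1}\mid X_t)>0$ for all $t$, yet $X_2$ already determines $Y$, so $I(X^{(T)};Y)=H(Y)$ for all $T\geq 2$ and strict monotonicity fails. The paper's step ``condition~(ii) implies $I(X';Y)=0$'' breaks on the same example. The repair is an added hypothesis $H(Y\mid X^{(T)})>0$ for all $T$---automatic in the intended application $Y=G$, where the graph carries far more entropy than any finite trajectory resolves---or else a weakened conclusion: nondecreasing, and strictly increasing while $H(Y\mid X^{(T)})>0$.
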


\begin{proof}
    Let us define a Markov chain $X' = (X_1, X_2, \cdots, X_{T - 1})$ of size $T-1$, such that the concatenation of $X'$ with state variable $X_{T}$ yields $X$. Hence, we can express the mutual information between $X$ and $Y$ in terms of $X'$ as $I(X;Y) = I(X', X_T; Y)$. Furthermore, proving the monotonicity of mutual information can be reformulated as proving the following inequality:
    \begin{equation}\label{eq:mutualinfo_diff1}
        I(X', X_T; Y) - I(X';Y) > 0\,,
    \end{equation}
    for all $T$. By the chain rule for conditional mutual information, that is $I(X',X_T;Y) = I(X_T;Y|X') + I(X';Y)$, inequality \eqref{eq:mutualinfo_diff1} becomes
    \begin{equation}\label{eq:mutualinfo_diff2}
        I(X_T;Y|X') = H(X_T|X') - H(X_T\mid X', Y) > 0\,.
    \end{equation}
    The term $H(X_T|X') - H(X_T\mid X', Y)$ is always at least non-negative, by virtue of the non-negativity of mutual information~\cite[Theorem 2.6.5]{cover2006elements}. Then, to prove inequality \eqref{eq:mutualinfo_diff2}, we must verify that $H(X_T|X')$ never equals $H(X_T\mid X', Y)$. Recalling that $H(X_T\mid X') \geq H(X_T\mid X', Y) \geq 0$, inequality \eqref{eq:mutualinfo_diff2} does not hold if (\emph{i}) $H(X_T|X') = 0$ or if (\emph{ii}) $X_T$ is independent of $Y$ (i.e., $I(X_T;Y|X') = 0$).
    According to the hypothesis $H(X_{t+1} |X_{t})>0$ for all $t\in\{1,\ldots,T-1\}$, condition (\emph{i}) cannot be true.
    Moreover, condition (\emph{ii}) implies that $I(X;Y) = I(X_T, X';Y) = I(X';Y) = 0$. Therefore, the only instance where Eq.~\eqref{eq:mutualinfo_diff1} is not satisfied is when the Markov chain $X$ is independent of $Y$, i.e., $I(X;Y) = 0$ for all length $T$.
    However, this contradicts the assumption about the transition probabilities. 
    Hence, $I(X;Y)>0$ and monotonically increases with $T$.
\end{proof}

Before presenting the main result of this section, let us make a few remarks about the restrictions imposed in the last lemma. 
The condition $H(X_{t+1} |X_{t})>0$ for all $t\in\{1,\ldots,T-1\}$ only asserts that the Markov chain is nondeterministic in the sense that knowing the state of the chain at time $t$ does not completely eliminate the uncertainty about the state at time $t+1$.
This condition is satisfied for wide variety of stochastic processes, including the irreducible Markov chains, where there is always a nonzero probability to transition from a state to any other state in a finite number of time steps. 
Moreover, the finiteness of the state spaces for the chain $X$ and the variable $Y$ is imposed to make $H(X)$, $H(Y)$, and $I(X;Y)$ finite. This in turn ensures that the uncertainty coefficients $U(Y\mid X)$ and $U(X\mid Y)$ are well defined for all $T\in\mathbb{Z}_+$, a property that is necessary to prove the next lemma.  

\begin{lemma}\label{lem:extension}
    Let $X=(X_1, X_2,\cdots, X_T)$ and $Y$ respectively be a Markov chain and a discrete random variable as in Lemma \ref{lem:monotonicity}. Then the uncertainty coefficients $U(Y\mid X)$ and $U(X\mid Y)$, interpreted as functions of $T\in\mathbb{Z}_+$, can be uniquely generalized to functions, respectively $f(T)$ and $g(T)$, that are holomorphic for all $T\in\mathbb{C}$, and thus real analytic for all $T\in\mathbb{R}_+$. Moreover, $H(X)$ can be extended to a function $h(T)$ that is analytic for all $T\in\mathbb{R}_+$ except where $f(T)=0$.
\end{lemma}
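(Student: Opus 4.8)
\emph{Strategy.} Since $U(Y\mid X)=I(X;Y)/H(X)$, $U(X\mid Y)=I(X;Y)/H(Y)$ and $I(X;Y)=H(X)-H(X\mid Y)$, it suffices to extend the three scalar functions $T\mapsto H(Y)$, $T\mapsto H(X\mid Y)$ and $T\mapsto H(X)$ in $T$ and then form the ratios. The first is constant in $T$; the second is easy because, conditionally on $Y$, the chain $X$ is Markov; the third is the crux, because $X$ itself is a \emph{mixture} of Markov chains and hence not Markov, so the additive chain-rule trick that handles $H(X\mid Y)$ is unavailable. Uniqueness of all the extensions will come at the end from a Carlson-type argument, using that $I(X;Y)$ is bounded (by $H(Y)$) while $H(X)=O(T)$.

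\emph{The easy part.} Writing $P_y$ for the ($T$-independent) transition matrix of $X$ given $Y{=}y$, $\mu_1$ for the law of $X_1$, and $\eta_y$ for the vector of per-state transition entropies $(\eta_y)_x=-\sum_{x'}P_y(x'\mid x)\log P_y(x'\mid x)$, the conditional chain rule gives $H(X\mid Y{=}y)=H(X_1)+\big\langle\mu_1,\ (\sum_{k=0}^{T-2}P_y^{\,k})\,\eta_y\big\rangle$. The matrix geometric series $\sum_{k=0}^{n}P_y^{\,k}$ is an exponential polynomial in $n$ by the Jordan decomposition of the finite matrix $P_y$: the (semisimple) eigenvalue $1$ yields a term affine in $n$ --- the entropy-rate term governing the linear growth of $H(X\mid Y)$ --- and every other eigenvalue $\lambda$ yields terms $n^{j}\lambda^{n}=n^{j}e^{n\,\mathrm{Log}\,\lambda}$ on a fixed branch. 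Hence $H(X\mid Y)=\sum_y P(y)\,H(X\mid Y{=}y)$ extends to an entire function $\mathcal H_{X\mid Y}(T)$ of $T\in\mathbb C$. (One must check that $P_y$ carries no Jordan block at the eigenvalue $0$; otherwise the series is only \emph{eventually} exponential-polynomial, and the finitely many initial corrections have to be absorbed before the Carlson uniqueness below applies.)

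\emph{The crux: extending $H(X)$.} To get around the non-Markovianity of $X$ I would use the integer moments $\Phi_s(T):=\sum_x P(X^{(T)}{=}x)^s$. Expanding $P(x)^s=\big(\sum_yP(y)P(x\mid y)\big)^s=\sum_{y_1,\dots,y_s}\prod_iP(y_i)\,P(x\mid y_i)$ and summing over the path $x$ collects, at each time step, a Hadamard product of conditional transition matrices:
\begin{equation}
\Phi_s(T)=\sum_{(y_1,\dots,y_s)}\Big(\prod_{i=1}^sP(y_i)\Big)\,\big\langle\mu_1^{\odot s},\ \big(P_{y_1}\odot\cdots\odot P_{y_s}\big)^{T-1}\mathbf 1\big\rangle ,
\end{equation}
with $\mu_1^{\odot s}$ the entrywise $s$-th power of $\mu_1$. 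Each Hadamard product is again a substochastic matrix (row sums $\le 1$), so the same Jordan bookkeeping shows $\Phi_s$ extends to an entire function of $T$ for every integer $s\ge 1$. The delicate step --- and the main obstacle --- is to interpolate in $s$: using $0\le\Phi_s(T)\le 1$ for $\mathrm{Re}\,s\ge 1$ one shows $(\Phi_s)_{s\in\mathbb Z_+}$ is the restriction of a function holomorphic in $s$ near $s=1$ (uniqueness by Carlson, existence by exhibiting a bounded interpolant), and then $H(X)=-\partial_s\Phi_s(T)\big|_{s=1}$ is an entire function $\mathcal H_X(T)$ of $T$. It follows that $\mathcal I(T):=\mathcal H_X(T)-\mathcal H_{X\mid Y}(T)$ is entire, agrees with $I(X;Y)$ on $\mathbb Z_+$, and is strictly positive there by Lemma~\ref{lem:monotonicity}.

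\emph{Assembly and uniqueness.} Put $g(T):=\mathcal I(T)/H(Y)$ (entire, equal to $U(X\mid Y)$ on $\mathbb Z_+$) and $f(T):=\mathcal I(T)/\mathcal H_X(T)$. The latter is a priori meromorphic and equals $U(Y\mid X)$ on $\mathbb Z_+$; it is holomorphic on a complex neighbourhood of $\mathbb R_+$ since $H(X)>0$ for real $T\ge 1$, which already gives real-analyticity on $\mathbb R_+$, and one checks from the explicit exponential-polynomial forms of $\mathcal I$ and $\mathcal H_X$ that every zero of the denominator is matched by the numerator, so $f$ extends holomorphically to all of $\mathbb C$. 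Defining $h(T):=\mathcal I(T)/f(T)$ gives a function equal to $H(X)$ on $\mathbb Z_+$ and analytic wherever $f\ne 0$, as claimed. Uniqueness of $f$, $g$, $h$ follows from Carlson's theorem via the boundedness of $I(X;Y)$ and the at-most-linear growth of $H(X)$ --- modulo the eigenvalue-$0$ caveat noted above, which must be settled so that the candidate extensions genuinely have exponential type controlled at the Carlson threshold.
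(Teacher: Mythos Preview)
Your approach is constructive and genuinely different from the paper's, but it has a real gap precisely at the step you flag as ``the main obstacle'': the analytic interpolation of $\Phi_s(T)$ in the order parameter $s$. You assert existence ``by exhibiting a bounded interpolant'' but never exhibit one; Guichard-type interpolation gives an entire function through $(\Phi_s(T))_{s\in\mathbb{Z}_+}$ with no growth control, and Carlson supplies only uniqueness, not existence of a \emph{bounded} interpolant. Analytically continuing a R\'enyi-type moment in its order so as to differentiate at $s=1$ and recover Shannon entropy is not a routine manoeuvre, and without it you have no extension of $H(X)$ at all. A second unjustified step is the claim that every complex zero of $\mathcal{H}_X$ is matched by a zero of $\mathcal{I}$, so that $f=\mathcal{I}/\mathcal{H}_X$ extends to an entire function; nothing in your construction forces this cancellation.

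The paper sidesteps both difficulties by reversing the order of operations. Instead of extending the unbounded $H(X)$ and $H(X\mid Y)$ and then forming ratios, it observes that the uncertainty coefficients $U(Y\mid X)$ and $U(X\mid Y)$ already take values in $[0,1]$ on $\mathbb{Z}_+$. Guichard's theorem then yields entire interpolants $f$ and $g$ directly for these bounded sequences, and a Carlson-type uniqueness argument (using boundedness on the positive real axis) pins them down. The extension of $H(X)$ is then \emph{defined} a posteriori via the identity $h(T):=H(Y)\,g(T)/f(T)$, automatically analytic wherever $f\neq 0$, which is exactly the statement of the lemma. This abstract route is far shorter and never touches the non-Markov mixture structure of $X$. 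Your exponential-polynomial analysis of $H(X\mid Y)$ is correct and would be valuable if one wanted explicit formulas for the extensions rather than bare existence, but for the lemma as stated the paper's trick of extending the bounded ratios first is the cleaner path.
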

\begin{proof}
    We first consider $U(X\mid Y)$ and $U(Y\mid X)$, which are defined in Eqs.~\eqref{eq:pred}--\eqref{eq:recon}. These can be interpreted as functions of $T\in\mathbb{Z}_+$ whose values belong to the interval $[0,1]$. According to Guichard's Theorem \cite[Theorem 5.2.1]{Davis1975} (see also \cite[Theorem 15.13]{Rudin1986}), there exist two functions of $z\in\mathbb{C}$, denoted $f$ and $g$, that are holomorphic in the whole complex plane and whose values at $z=T\in\mathbb{Z}_+$ equal those of $U(X\mid Y)$ and $U(Y\mid X)$, respectively. 
     
    Now, $U(X\mid Y)$ and $U(Y\mid X)$, and consequently $f(z)$ and $g(z)$, have bounded values for all $z=T\in\mathbb{Z}_+$. Moreover, $f$ and $g$ are holomorphic, so their restriction to the axis $z=T\in \mathbb{R}$ is real analytic. Hence, on that axis, $f$ and $g$ are Lipschitz continuous, which means that there are positive and finite constants, $a$ and $b$, such that $|f(T)-f(T')|\leq a|T-T'|$ and $|g(T)-g(T')|\leq b|T-T'|$ for all $T,T'\in \mathbb{R}$. Choosing $T=T'+\epsilon$ with $T'\in\mathbb{Z}_+$ and $|\epsilon| <1 $, we conclude that $f(T)$ and $g(T)$ have finite values for all $T\in \mathbb{R}_+$. 
     
    The functions $f$ and $g$ are thus holomorphic in the whole complex plane and bounded on the positive real axis. This allows to  use a special case of Carlson's Theorem \cite[Theorem 2.8.1]{Andrews1999} according to which holomorphic functions that are bounded on the positive real axis are uniquely defined by their values on the set $\mathbb{Z}_+$. Therefore, $f$ is the unique extension $U(X\mid Y)$ that is holomorphic for all $T\in\mathcal{C}$. Note that the restriction of $f$ on the positive real axis is real analytic on this domain. 
    Thus, there is a unique extension of  $U(X\mid Y)$ that is real analytic for all $T\in \mathbb{R}_+$ and that can be further extended to a holomorphic function for all $T\in \mathbb{C}$. 
    The same conclusion holds for $g$ and $U(Y\mid X)$. 
     
    To finish the proof, we need to tackle $H(X)$. We cannot use the same strategy as above because $H(X)$ is not a bounded function of $T\in \mathbb{Z}_+$. However, by definition, the identity 
    \begin{equation}\label{eq:identity_U}
       H(X)  = \frac{H(Y) U(Y\mid X)}{U(X\mid Y)}\,.
    \end{equation}
    is valid whenever $U(X\mid Y)>0$. Now, according to Lemma  \ref{lem:monotonicity}, $I(X;Y)>0$ and hence $U(X\mid Y)>0$ for all $T\in \mathbb{Z}_+$. This means that Eq.~\eqref{eq:identity_U} is well defined for all $T\in \mathbb{Z}_+$. To extend the domain of validity of the identity, we use the analytic functions $f$ and $g$ introduced above and define a new function $h$ as 
    \begin{equation}
       h(T)  = H(Y) \frac{g(T)}{f(T)}\,.
    \end{equation}
    The values of $h$ coincide with those of $H(X)$ for all $T\in\mathbb{Z}_+$, so that Eq.~\eqref{eq:identity_U} defines a unique extension of $H(X)$. Moreover, $h$ is analytic for all $T\in \mathbb{R}_+$ except at the points $T$ where $f(T)=0$.  
\end{proof}

Lemma~\ref{lem:extension} ensures the \emph{existence of analytic extensions} for the uncertainty coefficients, considered as functions of the positive integer $T$. 
These extensions can thus be evaluated and derived without restriction on the whole domain $\mathbb{R}_+$, which is a desirable property that will soon be exploited. 
However, the same lemma does not guarantee the monotonicity of the extensions on $\mathbb{R}_+$ in the event where they are monotone on $\mathbb{Z}_+$, although we will assume that it is the case from now on. 
This is a reasonable assumption since numerical methods, generalizing the well-known Fritsch-Butland algorithm \cite{fritsch1984method}, have been recently developed to \emph{construct smooth} (i.e., at least continuously differentiable) \emph{and monotone interpolating functions} from any finite monotone datasets \cite{wolberg2002energy, yao2018unconditionally}. 
With this assumption in hand, together with Lemmas~\ref{lem:monotonicity} and \ref{lem:extension}, we now proceed to prove our main theoretical result: the universality of the $T$-duality in Markov chains.

\begin{theorem}\label{thm:universality}
    Let $X=(X_1, X_2,\cdots, X_T)$ and $Y$ respectively be a Markov chain and a discrete random variable as in Lemma \ref{lem:monotonicity}. Additionally, we suppose that $X$ has a finite nonzero entropy rate and that $Y$ has a nonzero entropy. Then there exists a positive constant $\tau$ such that the uncertainty coefficients $U(Y\mid X)$ and $U(X\mid Y)$ are $T$-dual for all $T \geq \tau$.
\end{theorem}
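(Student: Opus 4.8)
The plan is to reduce $T$-duality on a half-line to two monotonicity facts: that $U(Y\mid X)$ is eventually strictly increasing in $T$ and that $U(X\mid Y)$ is eventually strictly decreasing. If both hold for all $T\geq\tau$, then $\pder{U(Y\mid X)}{T}>0>\pder{U(X\mid Y)}{T}$ on $[\tau,\infty)$, the product of these derivatives is negative, and the coefficients are $T$-dual there by definition. Throughout I would work with the real-analytic extensions $f,g$ of $U(X\mid Y),U(Y\mid X)$ provided by Lemma~\ref{lem:extension}, together with the standing assumption of that appendix that these extensions inherit the monotonicity of the integer-indexed data, so that ``$\pder{}{T}$'' is meaningful.

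The increasing half is immediate. Since $Y$ is independent of $T$, $H(Y)$ is a fixed positive constant (positive because $Y$ has nonzero entropy), so $U(Y\mid X)=I(X;Y)/H(Y)$ is, up to this constant, just $I(X;Y)$; Lemma~\ref{lem:monotonicity} gives that $I(X;Y)$ is strictly increasing in $T$, hence so is $U(Y\mid X)$, and by the monotone-extension assumption $g'(T)>0$.

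The decreasing half carries the content. First I would note that $I(X;Y)\leq H(Y)<\infty$ for every $T$ (finite state space of $Y$), so the increasing quantity $I(X;Y)$ converges to some $I_\infty$ with $0<I_\infty\leq H(Y)$; meanwhile the hypothesis of a finite nonzero entropy rate $\bar H$ gives $H(X)\sim\bar H\,T\to\infty$. Hence $U(X\mid Y)=I(X;Y)/H(X)\to 0$ --- the precise sense in which ``$I(X;Y)$ increases more slowly than $H(X)$''. For the sign I would differentiate,
\begin{equation}
  \pder{}{T}U(X\mid Y)=\frac{I'(X;Y)\,H(X)-I(X;Y)\,H'(X)}{H(X)^2}\,,
\end{equation}
and study the numerator: the term $I(X;Y)\,H'(X)$ tends to $I_\infty\bar H>0$ and so is bounded below by a positive constant for $T$ large, hence it suffices that the other term $I'(X;Y)\,H(X)$ tend to $0$; since $H(X)=O(T)$, this is the claim $I'(X;Y)=o(1/T)$. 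Granting it, the numerator is negative for all $T\geq\tau$, so $f$ is strictly decreasing on $[\tau,\infty)$ and the proof is complete.

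The main obstacle is precisely this estimate $I'(X;Y)=o(1/T)$, equivalently, on the integer lattice, $T\,\bigl[I(X;Y)|_{T+1}-I(X;Y)|_{T}\bigr]\to0$. Boundedness of $I(X;Y)$ already makes its increments summable, but that alone does not yield the sharper decay; what does is the Markov structure. Indeed $I(X;Y)|_{T+1}-I(X;Y)|_{T}=I(X_{T+1};Y\mid X_{1:T})=H(X_{T+1}\mid X_{1:T})-H(X_{T+1}\mid X_T,Y)$, and for a finite time-homogeneous ergodic chain conditioned on $Y$ both terms approach the common entropy rate $\bar H$ at a geometric rate --- the second because $H(X_{T+1}\mid X_T,Y)$ averages time-independent row entropies against a marginal relaxing to stationarity geometrically, the first by the corresponding mixing of the hidden/mixture process --- so the increments are $O(\rho^T)$ with $\rho<1$ and $T\,I'(X;Y)\to0$ a fortiori; alternatively, one shows that $T\mapsto I(X;Y)$ is eventually concave and extracts $o(1/T)$ from the tail of the convergent series of increments. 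This is the single step that genuinely uses more than Lemmas~\ref{lem:monotonicity}--\ref{lem:extension}, and the picture it produces --- $U(X\mid Y)$ climbing to one maximum and then decaying to $0$ while $U(Y\mid X)$ increases monotonically toward $I_\infty/H(Y)$ --- is exactly the configuration that Lemma~\ref{lem:extrema} certifies to be $\theta$-dual past that maximum, which pins down $\tau$.
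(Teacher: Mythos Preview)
Your scaffolding agrees with the paper: show $U(Y\mid X)$ strictly increasing via Lemma~\ref{lem:monotonicity} and the constancy of $H(Y)>0$, then show $U(X\mid Y)$ eventually strictly decreasing. The gap is in the second half. You differentiate $U(X\mid Y)=I(X;Y)/H(X)$ and reduce negativity of the numerator to (i) $I(X;Y)\,H'(X)\to I_\infty\bar H>0$ and (ii) $I'(X;Y)=o(1/T)$. Neither is supplied by the hypotheses. For (i), the entropy-rate assumption $H(X)/T\to\bar H$ does not force the derivative of the analytic extension to converge to $\bar H$ (or even to stay bounded below by a positive constant). For (ii), summability of the increments of $I(X;Y)$ (from $I(X;Y)\leq H(Y)<\infty$) does not by itself yield $o(1/T)$ decay; your route to that---geometric mixing of a time-homogeneous ergodic chain, or eventual concavity of $T\mapsto I(X;Y)$---imports assumptions the theorem does not make. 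Only $H(X_{t+1}\mid X_t)>0$, finite state spaces, and a finite nonzero entropy rate are given.

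The paper sidesteps both issues with a contradiction argument that needs no pointwise control of $I'$ or $H'$. Writing $\log U(X\mid Y)=\log H(Y)+\log U(Y\mid X)-\log H(X)$ gives $\partial_T\log U(X\mid Y)=\partial_T\log U(Y\mid X)-\partial_T\log H(X)$. If $U(X\mid Y)$ were not eventually decreasing, then $(\log U(Y\mid X))'\geq(\log H(X))'$ on a half-line, and Gr\"onwall's inequality forces $U(Y\mid X)$ to grow at least as fast as $H(X)\sim RT$, i.e.\ $U(Y\mid X)\in\Omega(T)$---contradicting $U(Y\mid X)\leq 1$. The only inputs are the bound $U(Y\mid X)\leq 1$ and the linear growth of $H(X)$, both immediate from the stated hypotheses. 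If you want to rescue the direct approach, the missing lemma is that $T\mapsto I(X_{T+1};Y\mid X_{1:T})$ is nonincreasing, which would give $o(1/T)$ from summability; but that is neither assumed nor proved here, and the Gr\"onwall route makes it unnecessary.
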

\begin{proof}
    According to Lemma \ref{lem:extension}, the quantities $U(X\mid Y)$, $U(Y\mid X)$, and $H(X)$, which were originally defined as real functions of $T\in\mathbb{Z}_+$, have unique analytic extensions on the positive real axis, i.e., $T\in\mathbb{R}_+$. This allows us to treat $U(X\mid Y)$, $U(Y\mid X)$, and $H(X)$ as continuously differentiable functions with respect to $T$, where $U(Y\mid X) = \frac{I(X;Y)}{H(Y)}$ and $H(X)$ are also monotone.

     Now, by hypothesis, the entropy rate of the Markov chain $X$, $R := \lim_{T\to \infty}\frac{H(X)}{T}$, is well defined and nonzero.
    Hence, $H(X) \sim RT$, i.e., $H(X)$ is positive and asymptotically linearly increasing with $T$.
    Moreover, since $Y$ is independent of $T$ and $I(X;Y)>0$, it follows that $I(X;Y)$ is monotonically increasing with respect to $T$ by Lemma~\ref{lem:monotonicity}. As a result, $U(Y\mid X) = \frac{I(X;Y)}{H(Y)}$ is also monotonically increasing, since its denominator is independent of $T$, by assumption. This translates to the strict inequality $\pder{U(Y\mid X)}{T} > 0$. If there exists a $T$-duality, i.e., there is a domain of $T$ where Eq.~\eqref{eq:duality-criterion} is true, then $U(X\mid Y)$ must be monotonically decreasing with $T$---or $\pder{U(X\mid Y)}{T} < 0$---in that domain. To prove this, note that we can relate the two uncertainty coefficients using Eq.~\eqref{eq:identity_U}.
    This leads to the following differential equation
    \begin{equation}
        \pder{}{T}[\log U(X\mid Y)] = \pder{}{T}[\log U(Y\mid X)] - \pder{}{T}[\log H(X)]\,,
    \end{equation}
    where we used the fact that $\pder{H(Y)}{T} = 0$. Hence, to show that $U(X\mid Y)$ is monotonically decreasing with $T$, the following inequality must hold
    \begin{equation}\label{eq:goal_inequality}
        \pder{}{T}[\log U(Y\mid X)] < \pder{}{T}[\log H(X)] \,.
    \end{equation}

    Suppose for a moment that $U(X\mid Y)$ is in fact increasing, such that Eq.~\eqref{eq:goal_inequality} is false. This will eventually give rise to a contradiction. 
    Let $g(T):=U(Y\mid X)$ and $h(T):= H(X)$ be continuous functions of $T$ such that their derivative with respect to $T$ are respectively given by $g'(\tau) := \left.\pder{f(T)}{T}\right|_{T=\tau}$ and $h'(\tau) := \left.\pder{h(T)}{T}\right|_{T=\tau}$. Note that $0<f(\tau)\leq 1$ and $h(\tau)>0$ for all $\tau\in\mathbb{R}_+$. If Eq.~\eqref{eq:goal_inequality} is false, then
     \begin{equation}
        (\log g(T))' \geq (\log h(T))'\,.
    \end{equation}
    Using Gr\"{o}nwall's inequality \cite[Theorem 1.2.1]{lakshmikantham1969differential}, we get
     \begin{equation}\label{eq:upper_bound_f}
    \frac{g(T)}{g(a)} \geq \frac{ h(T)}{h(a)}\,,\qquad 0 < a < T.
    \end{equation}
    So far, we have established that $h(T)=H(X) \sim RT$ and that $U(Y\mid X)$ is  monotonically increasing. We have also proved that if $U(X\mid Y)$ is not monotonically decreasing with $T$, then inequality~\eqref{eq:upper_bound_f} is satisfied. However, the latter inequality and $h(T) \sim RT$ readily imply that $g(T)$ belongs to the class $\Omega(T)$, which is the set of all $\tilde{g}(T)$ such that there exist positive constants, $S$ and $T^*$, for which $\tilde{g}(T)\geq S T$ for all $T \geq T^*$ 
    (i.e., Knuth's Big Omega~\cite{Knuth1976}). 
    
    Two cases must be considered. First, if $ST^* > 1$, then $\tilde{g}(T)\geq S T^* > 1$, which is in direct contradiction with $g(T) \leq 1$ whenever $T\geq T^*$. Second, if $ST^* \leq 1$, then choose $T^{**} > S^{-1} \geq T^*$, so that  $\tilde{g}(T) \geq  ST^{**} > 1$ for all $T\geq T^{**}$. This again contradicts the inequality $g(T) \leq 1$ whenever $T\geq T^{**}$. As a result, inequality~\eqref{eq:upper_bound_f} cannot be satisfied when $T\geq\tau$, with $\tau = \max\set{T^*, T^{**}}$. We thus conclude that $U(X\mid Y)$ is monotonically decreasing for all $T\geq \tau$. Therefore, $U(Y\mid X)$ and $U(X\mid Y)$ are $T$-dual in the interval $[\tau,\infty)$.
\end{proof}

\subsection{Past-dependent mutual information}
\label{app:past_mi}
\begin{figure*}
    \centering
    \includegraphics{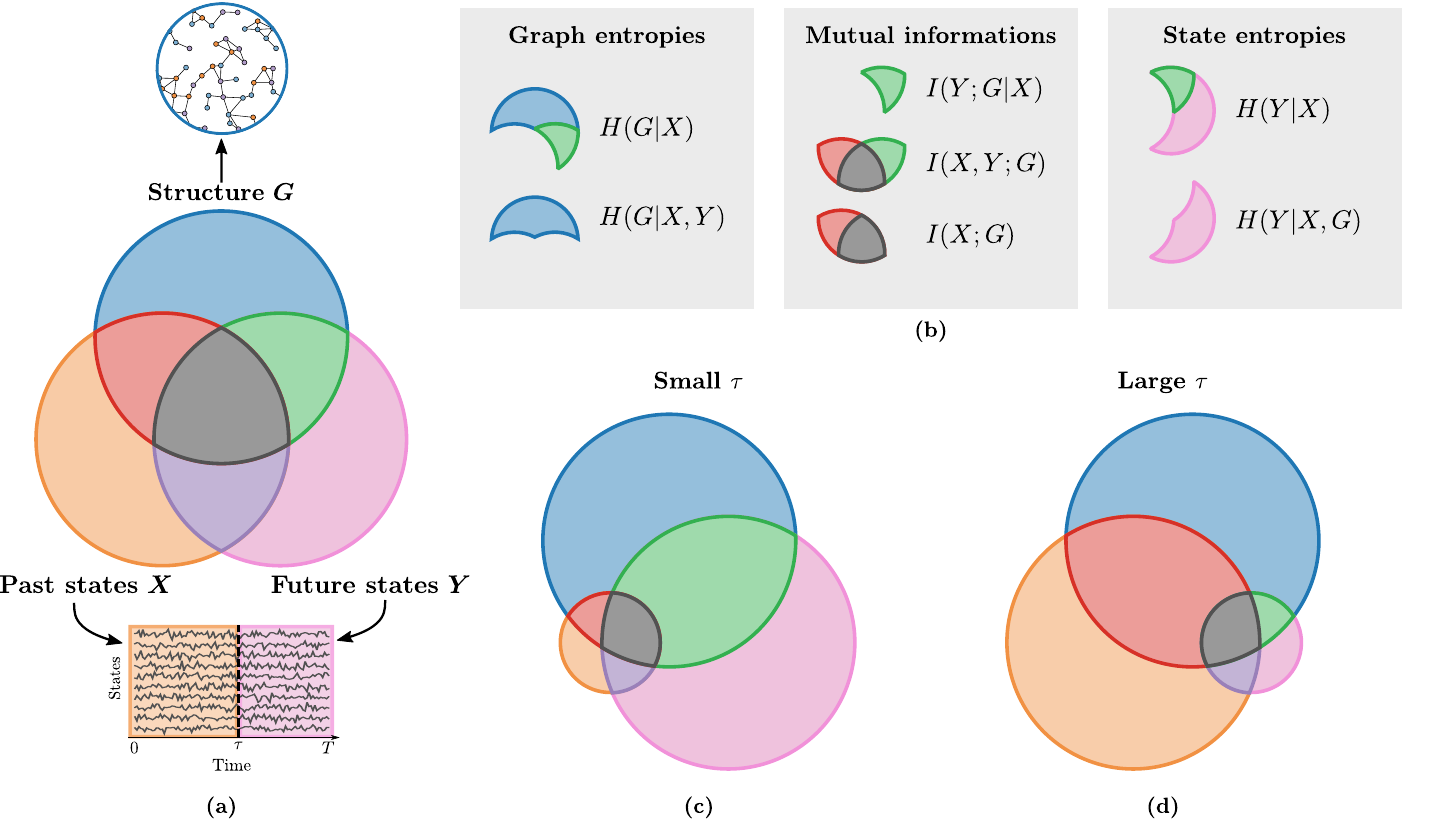}
    \caption{\textbf{Information diagrams for the past-dependent information measures}. On panel (a), we show the information diagram of the random variable triplet $(X, Y, G)$, where $X$ represents the past states, $Y$, the future and $G$, the structure of the system.  On panel (b), we highlight the quantities of interest for computing $I(Y;G \mid X)$, that is graph (left) and state (right) entropies involved in Eq.~\eqref{eq:cond_mi2}, and the mutual informations (middle) in Eq.~\eqref{eq:cond_mi1}. Panels (c) and (d) shows two extreme scenarios where the length of the past $\tau$ is small and large, which illustrates how the different information measures change with $\tau$.}
    \label{fig:figure5}
\end{figure*}
We present a generalization of the mutual information in which the Markov chain, hereafter denoted $Z = (X, Y)$, is partitioned into two parts, namely the past states $X$ and the future states $Y$, both conditioned on a random graph $G$. 
The past $X$ and the future $Y$ are both Markov chains, of respective length $\tau$ and $T-\tau$, where $T$ is the complete length of $Z$. 
By separating the past from the future, we can define new information measures that are closer to more standard predictability measures \cite{delsole2007predictability, giannakis2012information} interested in quantifying how knowledge about the past influences our capacity to predict the future.
In this new scenario, we define the past-dependent mutual information as follows:
\begin{align}\label{eq:cond_mi1}
    I(Y;G \mid X) = I(X,Y; G) - I(X;G)\,,
\end{align}
which is a conditional mutual information, where both $I(X,Y;G)$ and $I(X;G)$ can be expressed as before from Eq.~\eqref{eq:mutual_info}. As illustrated by the information diagram of Fig.~\ref{fig:figure5}(a), we can expand this mutual information in two ways, using either the dynamical or the structural interpretations:
\begin{equation}\label{eq:cond_mi2}
    \begin{split}
        I(Y;G \mid X) &= H(Y \mid X) - H(Y \mid X, G)\\ 
        &= H(G \mid X) - H(G \mid X,Y)\,.
    \end{split}
\end{equation}
These information measures are highlighted by Fig.~\ref{fig:figure5}(b). 
Similarly to Sec.~\ref{sec:information-theory}, we then define the partial uncertainty coefficients, bounded between 0 and 1:
\begin{subequations}\label{eq:partial_uncertainty_coefficients}
\begin{align}
    U_X(Y \mid G) &= \frac{I(Y;G \mid X)}{H(Y \mid X)}\,,\\
    U_X(G \mid Y) &= \frac{I(Y;G \mid X)}{H(G \mid X)}\,,
\end{align}
\end{subequations}
measuring the partial predictability of $Y$ from $G$ and partial reconstructability of $G$ given $Y$, respectively.

The physical interpretation of the conditional mutual information $I(Y;G\mid X)$ is very analogous to that of $I(Z;G)$ presented in Sec.~\ref{sec:information-theory}, but still demands further clarifications.
Indeed, it is still a measure of uncertainty reduction between a dynamics $Y$ and a random graph $G$, but where the mutual information associated to the past states $X$ has already been taken into account, as expressed by Eq.~\eqref{eq:cond_mi1}.
Hence, we expect that $I(Y;G \mid X)$ decreases when the length $\tau$ of the past chain $X$ increases [see Figs.~\ref{fig:figure5}(c-d)].
In terms of the relationship between structure and dynamics, the interpretation of $I(Y;G \mid X)$ is less straightforward.
Indeed, the influence of $G$ over $Y$ can be reduced when $X$ is given because a fraction of the structural information is hidden in $X$.
This can potentially be misleading, since it does not necessarily imply that the influence of $G$ over the complete dynamics has been reduced whatsoever.
The behaviors of $U_X(Y\mid G)$ and $U_X(G\mid Y)$ can still result in $\theta$-dualities, as shown below, but these dualities are harder to interpret because of the structural information hidden in $X$. 
%

\begin{figure*}
    \centering
    \includegraphics{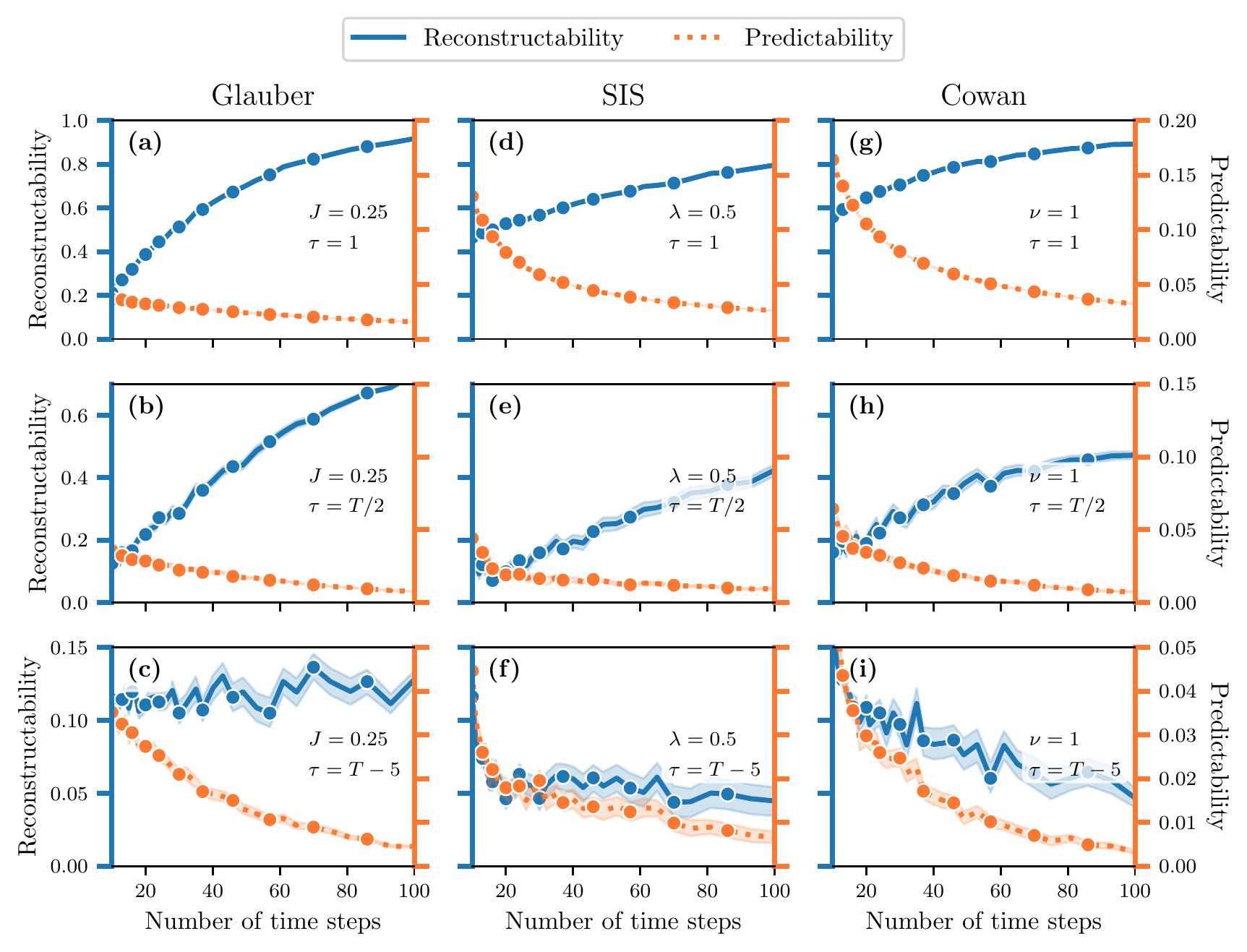}
    \caption{
        \textbf{Existence of the $T$-duality in the past-dependent case, for binary dynamics evolving on small Erd\H{o}s-R\'enyi random graphs}:
        (a-c) Glauber dynamics,
        (d-f) SIS dynamics and
        (g-i) Cowan dynamics.
        Like Fig~\ref{fig:exact-duality-timestep}, each panel shows the reconstructability coefficient $\recon\in[0,1]$ (blue) and the predictability coefficient $\pred\in[0,1]$ (orange) as a function of the number of time steps $T$, and .
        In each row, we change the value of the length $\tau$ of the past Markov chain $X$: (a,d,g) $\tau=1$, (b,e,h) $\tau = T/2$ and (c,f,i) $\tau= T - 5$.
        We used graphs of $N=5$ vertices and $E=5$ edges and each symbol corresponds to the average value measured over $2000$ samples.
        We also show different values of the coupling parameters, as indicated on each figure.
    }
    \label{fig:figure6}
\end{figure*}

Using the partial uncertainty coefficients in Eqs.~\eqref{eq:partial_uncertainty_coefficients}, we investigate the $T$-duality for different values of $\tau$.
Two different scenarios are of interest: The case where $\tau$ is constant with respect to $T$ and the case where it is not. 
When $\tau$ is constant with respect to $T$, Theorem~\ref{thm:universality} remains valid since the additional conditions on the Markov chain $Y$ and the random graph $G$ are a special case of the prior assumptions.
Hence, we observe the $T$-duality for any value of $\tau$ in this case, as supported by Figs.~\ref{fig:figure6}(a,d,g).

The second scenario, when $\tau$ is a function of $T$, is more nuanced, as seen in Figs.~\ref{fig:figure6}(b-c,e-f,h-i) since Theorem~\ref{thm:universality} no longer applies.
This is because both $Y$ and $G$ (represented by $X$ and $Y$ in Theorem~\ref{thm:universality}, respectively) are now conditioned on $X$, and thus will depend on $T$.
Consequently, we no longer can assume that the entropy rate of $Y$ given $X$ is constant with $T$ and that $H(G\mid X)$ is independent of $T$.
In Fig.~\ref{fig:figure6}, we break this scenario into two cases.
We consider $\tau = \kappa T$ [Figs.~\ref{fig:figure6}(b,e,h) with $\kappa=\frac12$], where the lengths of $X$ and $Y$ remain proportional to one another. 
In this case, the $T$-duality seems to persist for all three dynamics.
However, when $\tau = T - \xi$ [Figs.~\ref{fig:figure6}(c,f,i) with $\xi=5$] where the size of $Y$ remains fixed and $X$ grows linearly with $T$, the $T$-duality is no longer observed, except for the Glauber dynamics.
It is important to note that, for small $\xi$, the partial reconstructability coefficient $U_X(G\mid Y)$ becomes numerically unstable since both $I(Y;G\mid X)$ and $H(G\mid X)$ tend to zero.
This is why the curves are much noisier in that case.
Informed by these examples, we make the following conjecture:
\begin{conjecture}\label{conj:tau_scaling}
    Let $Z=(X,Y)$ be a Markov chain, composed of the two consecutive Markov chains $X$ and $Y$ of respective length $\tau$ and $T-\tau$, such that $Z$ is conditioned on a discrete random variable $W$. Then, there exists a function $g(T)$ such that, if $\tau$ is dominated by $g(T)$, the partial uncertainty coefficients $U_X(Y\mid W)$ and $U_X(W\mid Y)$ are $T$-dual, and they are not otherwise.
\end{conjecture}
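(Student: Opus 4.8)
The plan is to transpose the three-step architecture behind Theorem~\ref{thm:universality}---a monotonicity lemma for the conditional mutual information, an analytic-continuation lemma in the style of Lemma~\ref{lem:extension}, and a Gr\"onwall contradiction---to the partitioned chain $Z=(X,Y)$, the one genuinely new ingredient being that the denominator $H(W\mid X)$ of $U_X(W\mid Y)$ now depends on $T$ through $\tau=\tau(T)$, whereas the analogous denominator $H(Y)$ in Theorem~\ref{thm:universality} was constant. It is convenient to set $a(T)=I(Z;W)$, $b(T)=I(X;W)$ and $c=H(W)$, so that $I(Y;W\mid X)=a-b$, $H(W\mid X)=c-b$, $H(W\mid Z)=c-a$, $U_X(W\mid Y)=1-H(W\mid Z)/H(W\mid X)$ and $U_X(Y\mid W)=(a-b)/H(Y\mid X)$.

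\textbf{The candidate $g$.} Reading ``$\tau$ dominated by $g$'' as pointwise eventual dominance $\tau(T)\le g(T)$, I would take $g$ to be any function forcing the ratio $\phi(T):=H(Y\mid X)/H(W\mid X)$ to diverge, since this is exactly what fuels the contradiction below. Two independent mechanisms make $\phi\to\infty$: either $H(Y\mid X)\to\infty$, which---using $H(X_{t+1}\mid X_t)>0$, the finite state spaces and the resulting uniform positive lower bound on $H(Y_{t+1}\mid Y_t,W)$, hence $H(Y\mid X)\ge\gamma(T-\tau)$ for some $\gamma>0$---is equivalent to $T-\tau(T)\to\infty$; or $H(W\mid X)\to0$, i.e.\ the past asymptotically reconstructs $W$ ($b(T)\to c$). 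The former is the generic case and fixes $g(T)=T-r(T)$ with $r(T)\to\infty$; the latter is why a process like Glauber---whose long trajectories almost surely determine the couplings---can stay $T$-dual even for $\tau=T-\xi$, so that its admissible $g$ can be taken close to $T$ itself and the ``not otherwise'' clause becomes vacuous. This is precisely why $g$ must be allowed to be process-dependent, in accordance with the phrasing ``there exists $g$''.

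\textbf{The three steps.} (i) One shows $I(Y;W\mid X)=a-b>0$ and, for $\tau$ held fixed, strictly increasing in $T$, from the conditional chain rule $I(Y,Y_{T+1};W\mid X)-I(Y;W\mid X)=I(Y_{T+1};W\mid X,Y)\ge0$ together with the two exclusions used in Lemma~\ref{lem:monotonicity} (non-determinism forbids a deterministic step; a vanishing increment forces, via the Markov structure, $I(Y;W\mid X)=0$). With $H(W\mid X)$ constant and $H(W\mid Z)$ decreasing this makes $U_X(W\mid Y)$ increasing; when $\tau(T)$ also grows, differentiating $(a-b)/(c-b)$ reduces $\pder{}{T}U_X(W\mid Y)>0$ to the relative-decay comparison $\pder{}{T}\log H(W\mid Z)\le\pder{}{T}\log H(W\mid X)$. (ii) Fixing a real-analytic interpolant $\tau(T)$ (e.g.\ $\tau=\kappa T$ or $\tau=T-\sqrt{T}$), the bounded coefficients $U_X(Y\mid W),U_X(W\mid Y)\in[0,1]$ extend---by Guichard's and Carlson's theorems exactly as in Lemma~\ref{lem:extension}---to unique holomorphic functions of $T$, and the identity $H(Y\mid X)=H(W\mid X)\,U_X(W\mid Y)/U_X(Y\mid W)$, valid on $\mathbb{Z}_+$ because $I(Y;W\mid X)>0$ there, extends $H(Y\mid X)$ analytically to $\mathbb{R}_+$ away from the zeros of the extension of $U_X(Y\mid W)$; as in Theorem~\ref{thm:universality} one assumes the monotone extensions stay monotone. (iii) Taking the logarithmic $T$-derivative of that identity,
\begin{equation}
  \pder{}{T}\log U_X(Y\mid W)=\pder{}{T}\log U_X(W\mid Y)-\pder{}{T}\log\phi(T)\,.
\end{equation}
Were $\pder{}{T}U_X(Y\mid W)\ge0$ on some tail $[a,\infty)$, one would get $(\log U_X(W\mid Y))'\ge(\log\phi)'$ there, whence Gr\"onwall's inequality gives $U_X(W\mid Y)(T)/U_X(W\mid Y)(a)\ge\phi(T)/\phi(a)\to\infty$, contradicting $U_X(W\mid Y)\le1$. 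By the monotone-extension assumption this leaves $\pder{}{T}U_X(Y\mid W)<0$ for all $T\ge\tau_0$; combined with $\pder{}{T}U_X(W\mid Y)>0$ from step (i), this is the local-duality criterion~\eqref{eq:duality-criterion}, so the coefficients are $T$-dual on $[\tau_0,\infty)$.

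\textbf{Main obstacle and the converse.} The delicate point is step (i) in the regime $\tau\asymp T$: the comparison $\pder{}{T}\log H(W\mid Z)\le\pder{}{T}\log H(W\mid X)$ does not follow from the hypotheses of Lemma~\ref{lem:monotonicity} alone, and when $H(W\mid X)\to0$ both the numerator and the denominator of $U_X(W\mid Y)$ vanish, so establishing $\pder{}{T}U_X(W\mid Y)>0$ becomes a fine asymptotic analysis of a $0/0$ ratio rather than a monotonicity estimate---exactly the ``numerically unstable'' regime flagged in the text. I expect this to require an additional mixing or contraction assumption on the $W$-conditioned chain, or a comparison argument exploiting that $Z$ extends $X$ only on its future side. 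The converse (``and not otherwise'') hinges on the failure of $\phi\to\infty$: when $T-\tau$ stays bounded and the past does \emph{not} asymptotically reconstruct $W$---as for SIS and Cowan below criticality, where rapid absorption keeps $b(T)$ bounded away from $c$---the Gr\"onwall obstruction disappears and $U_X(Y\mid W)$ need not decrease; converting this into an actual proof that duality fails, and pinning down the matching $g$, is where I expect the real content of the conjecture to reside.
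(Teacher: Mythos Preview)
The statement you are attempting to prove is labeled a \emph{conjecture} in the paper and is not proved there; the authors offer only numerical evidence (Fig.~\ref{fig:figure6}) together with the heuristic remark that Theorem~\ref{thm:universality} ceases to apply once $\tau$ depends on $T$, because neither $H(G\mid X)$ nor the entropy rate of $Y$ given $X$ can then be assumed independent of $T$. There is consequently no paper proof to compare your attempt against.

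Your strategy---transposing the monotonicity/analytic-extension/Gr\"onwall architecture of Theorem~\ref{thm:universality} to the partitioned chain, with the ratio $\phi(T)=H(Y\mid X)/H(W\mid X)$ playing the role that $H(X)/H(Y)$ played before---is the natural thing to try, and you have correctly located where it breaks down. First, step~(i), the monotonicity of $U_X(W\mid Y)$, does not follow from the standing hypotheses once $\tau$ grows with $T$: the comparison $\partial_T\log H(W\mid Z)\le\partial_T\log H(W\mid X)$ is an unproven assumption, and in the regime $H(W\mid X)\to0$ you are analyzing a $0/0$ limit with no control on the rate. Second, the ``and not otherwise'' clause requires exhibiting actual failure of duality, not merely the disappearance of the Gr\"onwall obstruction. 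Your own assessment---that the first gap likely needs an additional mixing or contraction hypothesis on the $W$-conditioned chain, and that the converse is ``where the real content resides''---is accurate: these are precisely the reasons the authors left this as a conjecture. What you have written is a sound outline of what a proof would have to accomplish, together with an honest inventory of the obstructions, but it is not a proof, and the paper does not claim one exists.
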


\subsection{Estimators of the mutual information}
\label{app:estimators}
The mutual information $\mi$ is generally intractable. Its intractability stems from the evaluation of the \emph{evidence} probability, which is defined by the following equation:
\begin{equation}\label{eq:evidence}
    P(X) = \sum_{G\in\mathcal{G}_N} P(G) P(X\mid G)\,.
\end{equation}
Indeed, this sum potentially counts a number of terms which grows exponentially with the number of vertices $N$ in the random graph. More specifically, the evidence probability appears in two entropy terms needed to compute the mutual information, namely the marginal entropy $H(X) = -\mean{\log P(X)}$ and the reconstruction entropy $H(G \mid X) = -\mean{\log \frac{P(G)P(X \mid G)}{P(X)}}$, where $\mean{f(Y)}$ denotes the expectation of $f(Y)$. Fortunately, the evidence probability, and in turn the mutual information, can be estimated efficiently using Monte Carlo techniques, which we present in this section.

\subsubsection{Graph enumeration approach}
For sufficiently small random graphs ($N\leq 5$), the evidence probability can be efficiently computed by enumerating all graphs of $\mathcal{G}_N$ and by adding explicitly each term of Eq.~\eqref{eq:evidence}. Then, we can estimate the mutual information by sampling $M$ graph-states pairs, denoted $(G^{*(m)}, X^{*(m)})$, and by computing the following arithmetic average:
\begin{equation}
    \begin{split}
        I(X;G) \simeq \frac{1}{M} \sum_{m=1}^M &\log P\Big(X^{*(m)} \mid G^{*(m)}\Big) \\&
        - \log P\Big(X^{*(m)}\Big)\,.
    \end{split}
\end{equation}
The variance of this estimator scales with the inverse of $\sqrt{M}$. In Fig.~\ref{fig:exact-duality-timestep}, we used this estimator to compute the mutual information, where $M = 1000$.

\subsubsection{Variational mean-field approximation}
In this approach, we estimate the posterior probability instead of the evidence probability. According to Bayes' theorem, the posterior probability is 
\begin{equation}
    P(G \mid X) = \frac{P(G)P(X \mid G)}{P(X)}\,.
\end{equation}
Behind this estimator is a variational mean-field (MF) approximation that assumes the conditional independence of the edges. For simple graphs, the MF posterior is
\begin{equation}\label{sieq:mf_posterior}
    P_\mathrm{MF}(G \mid X) = \prod_{i\leq j} [\pi_{ij}(X)]^{A_{ij}}\, [1 - \pi_{ij}(X)]^{1 - A_{ij}}\,,
\end{equation}
where $\pi_{ij}(X):= P(A_{ij}=1 \mid X)$ is the marginal conditional probability of existence of the edge $(i, j)$ given $X$. For multigraphs, a similar expression can be obtained, but instead involves a probability $\pi_{ij}(\omega\mid X) := P(M_{ij}=\omega|X)$ that there are $\omega$ multi-edges between $i$ and $j$. In this case, the MF posterior becomes
\begin{equation}
    P_\mathrm{MF}(G \mid X) = \prod_{i<j}\prod_{\omega=0}^\infty [\pi_{ij}(\omega\mid X)]^{\delta_{\omega,M_{ij}}}\,,
\end{equation}
where $\delta_{x,y}$ is the Kronecker delta. The MF approximation allows to compute a lower bound of the true posterior entropy, such that
\begin{equation}\label{sieq:mf_lowerbound}
    H(G \mid X) \geq -\mean{\log P_\mathrm{MF}(G \mid X)}\,,
\end{equation}
as a consequence of the conditional independent between the edges \cite[Theorem 2.6.5]{cover2006elements}. Using the MF approximation and a strategy similar to the exact estimator, we compute the MF estimator of the mutual information as follows:
\begin{equation}\label{sieq:mf_estimator}
    \begin{split}
        I(G \mid X) \geq \frac{1}{M}\sum_{m=1}^M \Big[&\log P_\mathrm{MF}\Big(G^{*(m)} \mid X^{*(m)}\Big) \\&- \log P\Big(G^{*(m)}\Big)\Big]\,.
    \end{split}
\end{equation}
To compute $P_\mathrm{MF}\Big(G^{*(m)} \mid X^{*(m)}\Big)$, we sample a set $\mathcal{Q}^{(m)} := \set{G_1^{*(m)},\cdots, G_Q^{*(m)}}$ of $Q$ graphs from the posterior distribution $P(G\mid X^{*(m)})$.
Then, we estimate the probabilities $\pi_{ij}(X) \simeq \frac{n_{ij}^{(m)}}{Q}$ using their corresponding maximum likelihood estimate, where $n_{ij}^{(m)}$ is the number of times the edge $(i, j)$ is seen in $\mathcal{Q}^{(m)}$.
An analogous maximum likelihood estimate is made in the multigraph case, where $\pi_{ij}(\omega\mid X) \simeq \frac{n_{ij;\omega}^{(m)}}{K}$ and $n_{ij;\omega}^{(m)}$ counts the number of times there were $\omega$ multiedges between $i$ and $j$ in $\mathcal{Q}^{(m)}$. This estimator is a lower bound of the mutual information---a consequence of Eq.~\eqref{sieq:mf_lowerbound}. Hence, it is biased, and the extent of this bias is dependent on the quality of the conditional independence assumption with respect to the true random graph. Note that the MF estimator can yield negative estimates of the mutual information (see Fig.~\ref{fig:numerical-validation}).

In Figs.~\ref{fig:numerical-validation} and \ref{fig:approx-duality-coupling}, we fix the number of graphs sampled from the posterior distribution to $Q=1000$, and propose $5N$ moves between each sample, as also mentioned in App.~\ref{app:mcmc-algo}.

\subsubsection{Annealed important sampling}

Whereas the MF estimator represents a biased estimator of the posterior probability $P(G \mid X)$, there exists other Markov chain Monte-Carlo (MCMC) techniques that tackle the problem of estimating the evidence probability directly. The one we consider in this paper is obtained from an \emph{annealed importance sampling} (AIS) procedure called the stepping-stone (SS) algorithm~\cite{xie2011improving}.

The procedure of the stepping-stone algorithm takes advantage of the fact that it is possible to sample efficiently from the posterior distribution $P(G \mid X)$ using MCMC (see Section~\ref{app:mcmc-algo}). In order to compute an accurate estimator of the evidence probability $P(X)$, the procedure samples the space $\mathcal{G}_N$ according to $P_\beta(G \mid X)$, where $0 \leq\beta\leq 1$ is an inverse temperature parameter that dampens the influence of the likelihood such that
\begin{equation}
    P_\beta(G \mid X) \propto [P(X \mid G)]^\beta P(G)\,.
\end{equation}
The inverse temperature basically allows the Markov chain to navigate $\mathcal{G}_N$ efficiently to construct an accurate estimator of $P(X)$, that is where the graph samples are not all too close or too far from the maximum posterior. More specifically, the AIS estimator is defined by
\begin{equation}
    P_\mathrm{AIS}(X) = \prod_{k=1}^K \mean{[P(X \mid G_k^*)]^{\beta_{k} - \beta_{k-1}}}\,,
\end{equation}
where $0= \beta_0 < \cdots < \beta_{K - 1} = 1$ and the expectation is evaluated with respect to $G_k^* \sim P_{\beta_k}(G \mid X^*)$, for each $k$. Similarly to the mean-field estimator, we estimate this expectation by collect a sample $\mathcal{Q}^{(m)}_k$ of $Q$ graphs distributed according to $P_{\beta_k}(G \mid X^{*(m)})$, for each $k$.

Taking the log of this equation gives us an estimator of the log-evidence probability, which we can use to compute the mutual information directly:
\begin{equation}
    \log P_\mathrm{AIS}(X) = \sum_{k=1}^K \log \left\{\mean{[P(X \mid G_k^*)]^{\beta_{k} - \beta_{k-1}}}\right\}\,.
\end{equation}
Although the estimator for $P_\mathrm{AIS}$ is unbiased, the one for the log-evidence probability introduces a bias:
\begin{equation}
    \log P(X) \geq \log P_\mathrm{AIS}(X)\,.
\end{equation}
This bias can be arbitrarily reduced by increasing $K$~\cite{xie2011improving}, although we found that doing so provides diminishing returns.
Using the AIS estimator of the evidence probability, we obtain an AIS estimator of the mutual information such that
\begin{equation}
    \begin{split}
        I(G;X) \leq \frac{1}{M}\sum_{m=1}^M \Big[&\log P\Big(X^{*(m)} \mid G^{*(m)}\Big) \\& -\log P_\mathrm{AIS}\Big(X^{*(m)}\Big)\Big]\,.
    \end{split}
\end{equation}

Following Ref.~\cite{xie2011improving}, we use values of $\beta_k$ distributed according to a beta distribution $\mathrm{Beta}(\alpha, 1)$, where $\beta_k = \left(\frac{k}{K}\right)^{1/\alpha}$, such that increasing $\alpha$ controls how skewed around zero the sequence $\set{\beta_k}_k$ is. For Fig.~\ref{fig:numerical-validation}, we fix $\alpha=0.5$ and $K=20$ and, for each value of $\beta_k$, we sample $1000$ graphs from $P_{\beta_k}(G \mid X^*)$, proposing $5N$ moves in-between each sample (see Appendix~\ref{app:mcmc-algo}).

\subsubsection{Evaluation of the mutual information in large systems}

Next, we evaluate the quality of each estimator on small and large systems. Figure~\ref{fig:numerical-validation}(a) shows the behavior of $\mi$ in the Glauber dynamics on a small Erd\H{o}s-R\'enyi random graph as approximated using the MF and AIS estimators, and compares them to an exact evaluation based on an explicit graph enumeration used in Fig.~\ref{fig:exact-duality-timestep}.
As expected the two estimators provide a lower and an upper bound for $\mi$, and these bounds are fairly tight.

Several caveats are in order.  On the one hand, the bias of the AIS estimator can, in principle, be reduced arbitrarily by increasing the number $K$ of temperature steps, but its evaluation becomes quickly computationally costly.  On the other hand, the evaluation of MF estimator is comparatively quicker, but cannot be improved by further sampling.
The AIS estimator is accordingly closer to the exact value throughout, but it can sometimes overestimate the mutual information above its upper bound since $H(X)$ is overestimated while $H(X\mid G)$ is not.
The MF estimator can also yield negative values of $\mi$ for small values of $J$---i.e., regimes where $H(G\mid X) \simeq H(G)$---due to an overestimated $H(G\mid X)$ becoming larger than $H(G)$.

Figure~\ref{fig:numerical-validation}(b) shows the same experiment as in Fig.~\ref{fig:numerical-validation}(a) but with larger graphs of $N=100$ vertices and leads to similar observations: the AIS estimator is always greater than the MF estimator, and both estimators sometimes yields approximated values for $\mi$ outside of the valid range [0,$\max\set{H(G), H(X)}$].
Interestingly, these bounds are nevertheless fairly close to one another, as in the case $N=5$.


\begin{figure}
    \centering
    \includegraphics[scale=1]{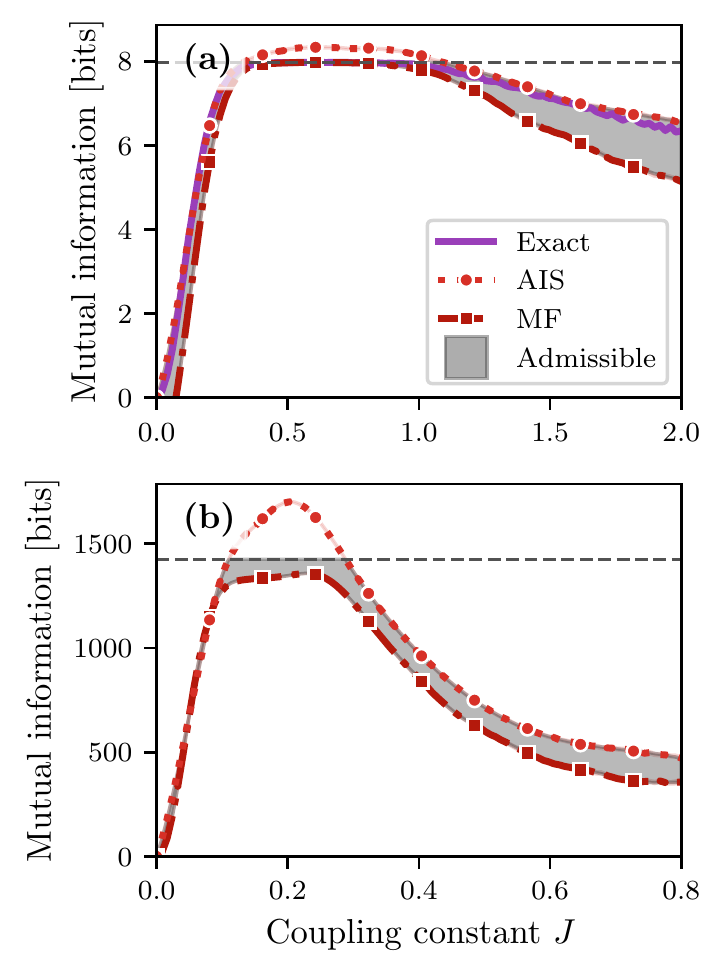}
    \caption{\textbf{Estimators of the mutual information in the Glauber dynamics on Erd\H{o}s-R\'enyi graphs as a function of the normalized coupling parameter $J\mean{k}$}: (a) $N=5$, $E=5$ and $T=100$ (b) $N=100$, $E=250$ and $T=1000$. The solid line in (a) corresponds to the exact evaluation of $\mi$ and is the same line as the one in Fig.~\ref{fig:exact-duality-timestep}(a). The circles and square in both (a) and (b) represent the values of $\mi$ computed using the AIS and the MF estimators, respectively. The dashed line indicates the upper bound of $\mi$, i.e., $\max\set{H(G), H(X)}$. We also show with a gray area the admissible values of $\mi$ bounded by the biased MF and AIS estimators.
    }
    \label{fig:numerical-validation}
\end{figure}

\subsubsection{Biases of the uncertainty coefficients}
When an estimation of the mutual information is biased, it necessarily follows that an estimation of the resulting uncertainty coefficients will also be biased. Fortunately, we can show that the direction of the bias does not change either for the reconstructability $U(G\mid X)$ or the predictability $U(X\mid G)$. Suppose that $\mathcal{I}_\varepsilon = I(X;G)(1 + \varepsilon)$ is an estimator of the mutual information, where $\varepsilon\in\mathbb{R}$ is a small bias which can be either positive or negative. Then, the corresponding estimators of the uncertainty coefficients, that we denote $\mathcal{P}_\varepsilon$ and $\mathcal{R}_\varepsilon$ for the predictability and the reconstructability, respectively, are
\begin{equation}
    \mathcal{P}_\varepsilon = \frac{\mathcal{I}_\varepsilon}{H(X\mid G) + \mathcal{I}_\varepsilon}\,.
\end{equation}
and
\begin{equation}
    \mathcal{R}_\varepsilon = \frac{\mathcal{I}_\varepsilon}{H(G)} = U(G\mid X)(1 + \varepsilon)\,,
\end{equation}
Note that we also suppose that $H(G)$ and $H(X\mid G)$ are not affected by the bias $\varepsilon$. For the first expression, we consider the first-order development of $\mathcal{P}_\varepsilon$ with respect to $\varepsilon$:
\begin{equation}
    \mathcal{P}_\varepsilon = U(X\mid G)\left[1 + \Big(1 - U(X\mid G)\Big)\varepsilon - \bigo{\varepsilon^2}\right]\,.
\end{equation}
Indeed, given that $U(X\mid G)\geq0$, the leading biased term $\Big(1 - U(X\mid G)\Big)\varepsilon$ must have the same sign as $\varepsilon$.
The second expression clearly shows that the bias of $\mathcal{R}_\varepsilon$ is exactly given by $\varepsilon$. Therefore, both $\mathcal{P}_\varepsilon$ and $\mathcal{R}_\varepsilon$ retain the direction of bias of $\mathcal{I}_\varepsilon$.

\subsection{Markov chain Monte-Carlo algorithm}
\label{app:mcmc-algo}

\begin{figure*}
    \centering
    \includegraphics[scale=1]{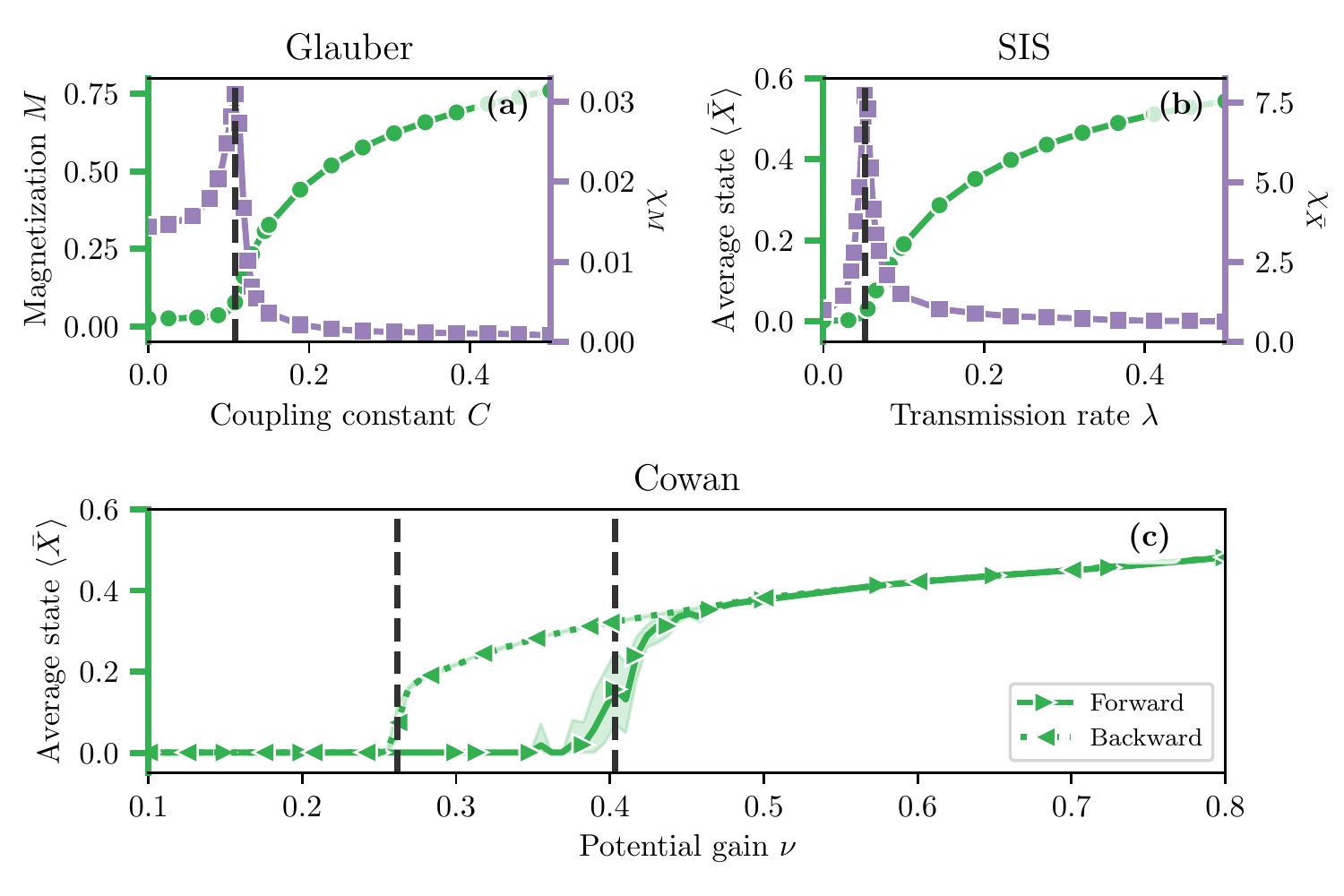}
    \caption{Numerical evaluation of the phase transition thresholds: (a) Glauber dynamics, (b) SIS dynamics, (c) Cowan dynamics. For panels (a) and (b), the left axis (green) shows the order parameter (green circles), and the right axis (purple) shows the susceptibility (purple squares). For panel (c), only the order parameter is shown but for both the forward (right triangle) and backward (left triangle) branches. The values of the thresholds are indicated by the vertical dashed lines. We used the same parameters as those of Fig.~\ref{fig:approx-duality-coupling}, but increased the number of steps $T=10^4$ to better sample from the dynamics. Each marker has been average over 48 realizations.}
    \label{fig:figure7}
\end{figure*}

To sample from the posterior distribution, we use a Markov chain Monte-Carlo (MCMC) algorithm where, starting from a graph $G$, we propose a move, denoted $\bar{G}^{*}\leftarrow G^*$, according to a proposition probability $P(\bar{G}^{*}| G^*)$, and accept it with the Metropolis-Hastings probability:
\begin{equation}\label{eq:accept-prob}
    \min\left(1,  e^{-\log\Delta}\frac{P(G^*|\bar{G}^{*})}{P(\bar{G}^{*}|G^*)}\right)\,,
\end{equation}
where $\Delta = \frac{P(\bar{G}^{*})P(X^*|\bar{G}^{*})}{P(G^*)P(X^*|G^*)}$ is the ratio between the joint probability of the two graphs with $X^*$. This ratio can be computed efficiently in $\bigo{T}$, by keeping in memory $n_{i,t}$, the number of inactive neighbors, and $m_{i,t}$, a number of active neighbors, for each vertex $i$ at each time $t$ (see Ref.~\cite{peixoto2019network}).
Equation~\eqref{eq:accept-prob} allows to sample from the posterior distribution $P(G\mid X)$ without the requirement to compute the intractable normalization constant $P(X)$.
We collect graph samples at every $N\delta$ moves, where we fix $\delta=5$ in all experiments.

We consider two types of random graphs with different constraints: The Erd\H{o}s-R\'enyi model and the configuration model. Hence, we need two different sampling propositions to apply our MCMC algorithm, that is one for each model. We assume that the support of the Erd\H{o}s-R\'enyi model is the set of all simple graphs of $N$ vertices with $E$ edges. In this case, we consider a \emph{hinge flip} move, where an edge $(i, j)$ is sampled uniformly from the edge set of the graph $G$ and a vertex $k$ is sampled uniformly from its vertex set. Then, with probability $\frac{1}{2}$, we rewire edge $(i,j)$ by either selecting $i$ or $j$ to connect with $k$. Note that, because we consider the support $\mathcal{G}_N$ of $G$ to be a space of simple graphs, all moves resulting in the addition of a self-loop or a multiedges are rejected with probability 1. As a result, the proposition probability is the same for any move $\bar{G}^*\leftarrow G^*$:
\begin{equation}
    P(\bar{G}^{*}|G^*) = \frac{1}{E N} \quad\Rightarrow\quad \frac{P(G^*|\bar{G}^*)}{P(\bar{G}^{*}|G^*)} = 1 \,.
\end{equation}
For the configuration model, we assume that the support is the set of all loopy multigraphs of $N$ vertices whose degree sequence is $\vec{k}$. In this case, we propose \emph{double-edge swap} moves according to the prescription of Ref.~\cite{fosdick2018configuring}. We refer to it for further details.

\subsection{Numerical estimation of the phase transition thresholds}
\label{app:thresholds}

We evaluate the phase transition thresholds of each dynamics using standard finite-size scaling techniques and Monte Carlo simulations (see Fig.~\ref{fig:figure7}). For Glauber, an adequate order parameter to visualize the phase transition is the magnetization $M := \frac{1}{N}\sum_i |2X_{i} - 1|$, where the absolute value breaks the spin symmetry~\cite{binder2010monte}. In this process, it is well known that the susceptibility of the order parameter $M$, given by
\begin{equation}
    \chi_M = \frac{\mean{M^2} - \mean{M}^2}{\mean{M}}\,,
\end{equation}
diverges at the threshold $J=J_c$ of the phase transition for infinite size systems~\cite{binder2010monte}. In finite systems, $\chi_M$ instead reaches a maximum at $J=J_c$. We use this fact to locate $J_c$ and show the corresponding results in Fig.~\ref{fig:figure7}(a).

For the SIS dynamics, a similar finite-size scaling analysis can be carried out, but a suitable order parameter is rather the average state $\bar{X} := \frac{1}{N}\sum_{i}X_{i}$. We also use a definition of the susceptibility that is more convenient for spreading processes~\cite{ferreira2012epidemic}, given in terms of $\bar{X}$:
\begin{equation}
    \chi_{\bar{X}} = \frac{\mean{\bar{X}^2} - \mean{\bar{X}}^2}{\mean{\bar{X}}}\,,
\end{equation}
which also diverges at the phase transition threshold $\lambda = \lambda_c$ for infinite size systems. We show the results for SIS in Fig.~\ref{fig:figure7}(b).

Finally, for the Cowan dynamics, we have a first-order phase transition characterized by a discontinuity of the order parameter $\bar{X}$ in the infinite size limit, and a bistable region bounded by two thresholds $\nu_c^b < \nu_c^f$. To find these two thresholds, we evaluate the order parameter $\bar{X}$ for varying values of the parameter $\nu$, and find the location where the discontinuity occurs. We obtain the forward and backward branches by using different initial conditions, where the system is nearly inactive---with one active vertex---and completely active---with no inactive vertex---, respectively.

For the Cowan dynamics, it is important to mention that since we consider relatively small systems ($N=1000$ vertices), the bistable region is not clearly defined. Hence, a system starting in the forward branch can jump on the backward branch with a non-zero probability. This is why the expected discontinuity at the threshold is, in fact, populated (see Fig.~\ref{fig:figure7}(c)). This finite-size effect should be reduced for considering larger systems, but increasing $N$ is unfortunately too computationally costly at the moment. Hence, to get a reasonable estimation of the thresholds in this scenario, we uniformly sample the set of $\nu$'s, compute $\mean{\bar{X}}$ for all values of $\nu$ and find the point $\nu^*$ corresponding to the maximum gap between two points. Then, to increase the precision or this estimation, we zoom on a region centered at $\nu^*$ and do it again, until it converges. This method provides reasonably accurate thresholds for our purposes.

\end{document}